\newtheorem{theorem}{Theorem}[section]
\newtheorem{claim}[theorem]{Claim}
\def\A{\mathbf{A}}
\def\B{\mathbf{B}}
\def\D{\mathbf{D}}
\def\I{\mathbf{I}}
\def\P{\mathbf{P}}
\def\S{\mathbf{S}}
\def\U{\mathbf{U}}
\def\V{\mathbf{V}}
\def\W{\mathbf{W}}
\def\e{\mathbf{e}}
\def\g{\mathbf{g}}
\def\x{\mathbf{x}}
\def\y{\mathbf{y}}
\def\z{\mathbf{z}}
\def\bGamma{\boldsymbol{\Gamma}}
\def\bLambda{\boldsymbol{\Lambda}}
\def\bepsilon{\boldsymbol{\epsilon}}
\def\Exp{\mathbb{E}}
\def\0{\mathbf{0}}
\def\1{\mathbf{1}}
\newcommand{\tomt}[1]{\textcolor{black}{#1}}
\newcommand{\tomtb}[1]{\textcolor{black}{#1}}
\newcommand{\tomtr}[1]{\textcolor{black}{#1}}
\newcommand{\tomtm}[1]{\textcolor{black}{#1}}
\newcommand{\tomtf}[1]{\textcolor{black}{#1}}
\definecolor{cvprblue}{rgb}{0.21,0.49,0.74}
\title{Image Restoration by Denoising Diffusion Models with Iteratively Preconditioned Guidance}
\author{Tomer Garber\\
Open University of Israel\\
{\tt\small tomergarber@gmail.com}
\and
Tom Tirer\\
Bar-Ilan University, Israel\\
{\tt\small tirer.tom@gmail.com}
}
\begin{document}
\maketitle

\begin{abstract}
Training deep neural networks has become a common approach for addressing image restoration problems.
An alternative %
for training a ``task-specific" network for each observation model is to use pretrained deep denoisers for imposing only the signal's prior within iterative algorithms, without additional training. %
Recently, \tomtb{a sampling-based} variant of this approach has become %
popular with the rise of diffusion/score-based generative models. %
Using denoisers for general purpose restoration requires guiding the iterations to ensure agreement of the signal with the observations.
In low-noise settings, guidance that is based on back-projection (BP) has been shown to be a promising strategy %
(used recently %
also under the names 
``pseudoinverse" or ``range/null-space" guidance).
However, the presence of noise in the observations hinders the %
gains from
this approach. 
In this paper, we propose a novel guidance technique, based on %
preconditioning
that allows traversing from BP-based guidance to least squares based guidance along the restoration scheme.
The proposed approach is robust to noise while still having much simpler  implementation than alternative methods (e.g., it does not require SVD or a large number of iterations).
\tomtb{We use it within both an optimization scheme and a sampling-based scheme}, and 
demonstrate its advantages over existing methods 
for image deblurring and super-resolution. 
\end{abstract}

\begin{figure}
    \centering
    \begin{subfigure}[h]{0.32\columnwidth}
        \centering
        \includegraphics[width=2.7cm, height=2.7cm]{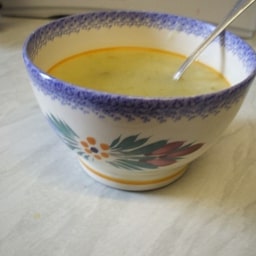}
    \end{subfigure} %
    \begin{subfigure}[h]{0.32\columnwidth}
        \centering
        \includegraphics[width=2.7cm, height=2.7cm]{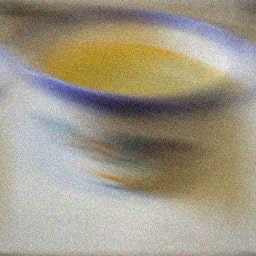}
    \end{subfigure}\\ %
    \begin{subfigure}[h]{0.32\columnwidth}
        \centering
        \includegraphics[width=2.7cm, height=2.7cm]{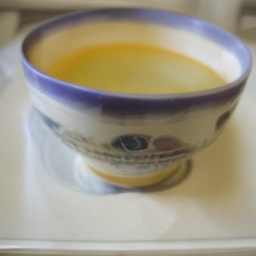}
    \end{subfigure}     
    \begin{subfigure}[h]{0.32\columnwidth}
        \centering
        \includegraphics[width=2.7cm, height=2.7cm]{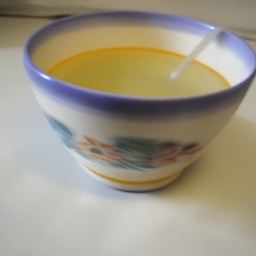}
    \end{subfigure}   
    \begin{subfigure}[h]{0.32\columnwidth} 
        \centering
        \includegraphics[width=2.7cm, height=2.7cm]{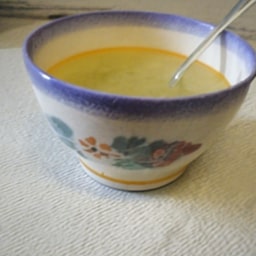}
    \end{subfigure}

    \caption{
    Motion deblurring with noise level 0.05.
    From top to bottom and left to right: original, observation, DPS \cite{chung2022diffusion}, DiffPIR \cite{zhu2023denoising} and our DDPG.
    }
    \label{fig:motion_deblur_0.05}
\end{figure}

\section{Introduction}
\label{sec:intro}

Image restoration problems appear in a wide range of applications, where the goal is to recover a high-quality image $\x^* \in \mathbb{R}^n$ from its degraded version $\y \in \mathbb{R}^m$, which can be noisy, blurry, low-resolution, etc. 
In many problems, the relation between $\y$ and $\x^*$ can be expressed using a linear observation model
\begin{align}
\label{eq:obsevation_model}
    \y = \A\x^* + \e
\end{align}
where $\A \in \mathbb{R}^{m\times n}$ is a measurement operator with $m\leq n$, and $\e\sim\mathcal{N}(\0, \sigma_e^2\I_m)$ is an additive white Gaussian noise.
For example, in the denoising task $\A$ is the identity matrix $\I_n$; in the deblurring task $\A$ is a blur operator; and in the super-resolution task $\A$ is a composition of sub-sampling and (anti-aliasing) filtering. 
Image restoration problems are %
ill-posed: just looking for $\x\in\mathbb{R}^n$ that fits $\y$ based on the observation model does not suffice for a successful recovery. Thus, it is required to utilize prior knowledge on the nature of $\x^*$.

Nowadays, it has become common to address these problems by exhaustively training a different deep neural network (DNN) for each {\em predefined} observation model in a supervised manner. Namely,  
using \eqref{eq:obsevation_model} to generate 
training pairs $\{\y_i,\x^*_i\}$ 
and training a DNN to invert the map 
\citep{dong2015image,Sun2015LearningAC,
lim2017enhanced,zhang2017beyond}. 
However, these ``task-specific" DNNs %
suffer from a huge performance drop when the observations at test-time mismatch (even slightly) the assumptions made in training \citep{shocher2018zero,tirer2019super,hussein2020correction}, which limits their 
applicability in many practical cases. %

An alternative approach is to use pretrained DNNs that impose only the signal prior, while the agreement with the observations is being handled at test-time in a ``zero-shot" manner.
A successful choice of such pretrained DNNs are Gaussian denoisers, as was initially demonstrated in different ``plug-and-play" (PnP) and ``regularization by denoising" (RED) iterative schemes \citep{venkatakrishnan2013plug,romano2017little,zhang2017learning,tirer2018image}.
\tomtb{The rise of score-based generative modeling \citep{song2019generative,ho2020denoising} --- typically referred to as denoising diffusion models (DDMs) ---
whose inference/sampling ``reversed" flows
are} 
based on iteratively injecting Gaussian noise and denoising, has further increased the popularity of using iterative denoising for general purpose restoration. %
In particular, 
operations that promote data-fidelity, similar to those used in PnP/RED, have been used  %
within the diffusion models' iterative %
sampling
schemes 
to guide the iterations towards an image that not only looks natural but also agrees with the observations \citep{song2021solving,
kawar2022denoising,chung2022improving,song2022pseudoinverse,wang2022zero,mardani2023variational,zhu2023denoising,chung2022diffusion,abu2022adir}. %

Combining iterative denoising with
guidance that is based on back-projection (BP) of intermediate estimates on the subspace $\{\x: \A\x=\y \}$ (more details in Section~\ref{sec:background}) was originally proposed in \citep{tirer2018image} 
(in the context of PnP),
and has been shown to be a promising strategy \tomtb{in both the PnP and the DDM literature}  \citep{tirer2019super,zukerman2020bp,sabulal2020joint, %
genzel2022near,savanier2023deep,chung2022improving,song2022pseudoinverse,wang2022zero,liu2023accelerating}. 
In 
the context of diffusion models 
it has been rediscovered 
under the names
``pseudoinverse" \citep{song2022pseudoinverse} or ``range/null-space" \citep{wang2022zero} guidance. 
However, 
\tomtb{in cases where $\A$ has low singular values (e.g., in image deblurring and super-resolution)}, 
the presence of noise in the observations $\y$ hinders the application of this approach: careful regularization is required %
and the performance gains %
tend to decrease.

\textbf{Contribution.} 
\tomt{In this paper, we identify the BP step as a %
preconditioned
version of a least squares (LS) gradient step. %
This inspires us to propose an iteration-dependent preconditioned guidance approach that}
essentially traverses
from the BP %
\tomt{step}
to the LS %
\tomt{gradient step}
along the restoration scheme.
Thus, it enjoys benefits of BP step \citep{tirer2019back,tirer2021convergence}, such as stronger consistency with $\y$ 
and accelerated convergence in early iterations, and the better robustness to noise of LS as the scheme approaches its end.
The specific %
\tomt{trajectory} 
that we %
devise
in this paper is also more simple and efficient to implement than other potential alternatives for BP, which 
\tomtb{either} require 
having full knowledge on the SVD of $\A$ \citep{kawar2022denoising},
\tomtb{or many neural function evaluations (NFEs) \citep{chung2022diffusion}.}

We present formal mathematical motivation for our guidance technique 
and use it within both a PnP-based scheme and its DDM sampling-based counterpart that we design.
We examine the proposed approach for image deblurring and super-resolution on the CelebA-HQ and ImageNet datasets, where we exploit the 
powerful denoisers of DDMs \citep{lugmayr2022repaint,guidedDiff}, 
and demonstrate its advantages over existing methods. 
In particular, our sampling-based reconstruction approach is flexible to the observation model, computationally convenient, and displays a good combination of perceptual quality and accuracy.

\section{Background and Related Work}
\label{sec:background}

\subsection{PnP Denoisers and Back-Projections}

Traditionally, estimating $\x^*$ from its measurements \eqref{eq:obsevation_model}
has been tackled by minimizing 
an explicit objective function
\begin{align}
\label{eq:traditional_cost}
  L(\x) = \ell(\x;\y) + s(\x),
\end{align}
which is composed of a data-fidelity term $\ell(\x;\y)$ and a signal prior term $s(\x)$. %
The PnP concept \citep{venkatakrishnan2013plug} suggests applying a proximal algorithm to \eqref{eq:traditional_cost}, where the proximal mapping associated with the prior function $s(\cdot)$, i.e., $\mathrm{prox}_s(\x):=\mathrm{argmin}_{\z}\frac{1}{2}\|\z-\x\|_2^2+s(\z)$ \citep{moreau1965proximite}, is replaced with an off-the-shelf Gaussian denoiser $\mathcal{D}(\cdot;\sigma_t)$ ($\sigma_t$ denotes the denoiser's noise level) that need not be associated with an explicit prior function (e.g., a pretrained DNN). 
A vast amount of literature has been dedicated to designing and analyzing PnP algorithms \citep{venkatakrishnan2013plug,chan2016plug,romano2017little,zhang2017learning,tirer2018image,reehorst2018regularization,sun2019online,kamilov2023plug}.

A popular PnP scheme based on the proximal-gradient method (PGM) is given by:
\begin{align}
\label{eq:ista_step2}
{\x}_{0|t} &=  \mathcal{D}( \x_t; \sigma_t ), \\
\label{eq:ista_step1}
\x_{t-1} &=  {\x}_{0|t} - \mu_t \nabla_{\x}  \ell( {\x}_{0|t} ;\y), 
\end{align}
where $t=T,T-1,...,1$ is the iteration index (decreases over time, to match the notation in diffusion/score-based models), $\mu_t$ is a step-size, and $\sigma_T>...>\sigma_1$ is a predefined decreasing set of noise levels.
The scheme is initialized with some $\x_T$ and the final estimate is given by $\x_{0|1}$.

A typical choice of data-fidelity term is the LS objective
\begin{align}
\label{eq:ls_term}
    \ell_{LS}(\x;\y) = \frac{1}{2}\|\A\x-\y\|_2^2,
\end{align}
for which the gradient step in \eqref{eq:ista_step1} is given by
\begin{align}
\label{eq:ls_grad_step}
    \x_{t-1} &=  {\x}_{0|t} - \mu_t \A^T(\A\x_{0|t} - \y).
\end{align}
\tomtb{Note though that PnP methods 
based on LS gradient steps tend to require many iterations \citep{romano2017little,tirer2021convergence}.} 
The IDBP method \citep{tirer2018image} suggests that %
the data-fidelity guidance, which follows the denoising operation, will be the 
back-projection (BP) of $\x_{0|t}$ onto the affine subspace $\{\x: \A\x=\y \}$:
\begin{align}
\label{eq:bp_prob}
    \x_{t-1} &= \mathrm{argmin}_{\x} \|\x-\x_{0|t}\|_2 \,\,\, \mathrm{s.t.} \,\,\, \A\x=\y \\
    \label{eq:bp}
    &= \A^\dagger\y + (\I_n-\A^\dagger\A)\x_{0|t} \\
    \label{eq:bp_grad_step}
    &= \x_{0|t} - \A^\dagger(\A\x_{0|t}-\y),
\end{align}
where $\A^\dagger$ denotes the pseudoinverse of $\A$. 

When $\A$ %
has near-zero singular values,
it has been shown to be beneficial to replace $\A^\dagger=\A^T(\A\A^T)^\dagger$ with a regularized version $\A^T(\A\A^T+\eta\I_m)^{-1}$, where $\eta>0$ is the regularization hyperparameter
\citep{tirer2018image,tirer2019super}.
Importantly, note that %
there are popular tasks where 
the pseudoinverse can be implemented efficiently \tomtb{(e.g., see %
the appendix 
for FFT based implementations of this operation for deblurring and super-resolution)}. And, in general, full rank $\A\A^T$ (and $\A\A^T+\eta\I_m$ otherwise) can be typically ``inverted" using few conjugate gradient iterations \citep{hestenes1952methods}, which only require applying $\A$ and $\A^T$ and bypass the need of matrix inversion or SVD.

As pointed out in \citep{tirer2019back,tirer2021convergence}, \eqref{eq:bp_grad_step} is equivalent to a gradient step \eqref{eq:ista_step1} with step-size $\mu_t=1$ and a special choice of $\ell(\x;\y)$, which they dubbed as the ``BP term":
\begin{align}
\label{eq:bp_term}
    \ell_{BP}(\x;\y) = \frac{1}{2}\|(\A\A^T)^{-1/2}(\A\x-\y)\|_2^2.
\end{align}
These works %
have analyzed, both empirically and theoretically, the effects of using BP steps \eqref{eq:bp_grad_step} rather than LS steps \eqref{eq:ls_grad_step} in inverse problems.
They show provable faster convergence benefits (less iterations) and, in the low-noise regime, also MSE benefits.
However, %
it is also shown that in the presence of observation noise and %
$\A$ with %
small singular values,
we have that $\A^\dagger\y$ amplifies the noise. Thus a strong regularization $\eta$ is needed which hinders the advantages.

\subsection{Restoration via Guidance of DDMs}

Pretrained diffusion/score-based generative models \citep{song2019generative,ho2020denoising,song2020score} have been shown to be a powerful signal prior for image restoration.
These generative models are based on training a network that approximates the score function $\nabla_{\x_t}\log p_t(\x_t)$, where $p_t$ denotes the distribution of $\x_t$, a Gaussian noisy version of the data $\x_0 \sim p_0$ with noise level indexed by $t \in [0,T]$.
For $\x_t = \x_0 + \bepsilon_t$ with $\bepsilon_t \sim \mathcal{N}(\0,\sigma_t^2\I_n)$,
by Tweedie's formula $-\sigma_t^2 \nabla_{\x_t}\log p_t(\x_t) = \x_t - \mathbb{E}[\x_0|\x_t] =\mathbb{E}[\bepsilon_t|\x_t]$ \citep{efron2011tweedie}. Therefore, these approaches essentially attempt to learn the MMSE Gaussian denoiser or, equivalently, noise estimation.
After training, the sampling schemes are based on random Gaussian initialization and 
alternatively denoising and injection of Gaussian noise with a decreasing noise level.

For concreteness, let us focus on the popular DDPM formulation \citep{ho2020denoising}.
In DDPM, $\x_t$ is generated from a data point $\x_0$ by the ``forward flow":
\begin{align}
    \x_t = \sqrt{1-\beta_t}\x_{t-1}+\sqrt{\beta_t}\bepsilon,
\end{align}
where $\bepsilon \sim \mathcal{N}(\0,\I_n)$ and $\{ \beta_t \}_{t=1}^T$ is a predetermined ``noise schedule" obeying $0<\beta_1 < \ldots < \beta_T \leq 1$.
By properties of Gaussians, $\x_t$ can be directly generated from $\x_0$ by
\begin{align}
\label{eq:xt_x0}
    \x_t = \sqrt{\bar{\alpha}_t}\x_0+\sqrt{1-\bar{\alpha}_t}\bepsilon,
\end{align}
which includes noise variance $1-\bar{\alpha}_t$ and signal scale factor $\sqrt{\bar{\alpha}_t}$, where $\bar{\alpha}_t=\Pi_{s=1}^t \alpha_s$ and $\alpha_t = 1-\beta_t$. 
In training time, a U-net DNN $\bepsilon_\theta(\cdot;t)$ is trained to predict the noise $\bepsilon$ given $\x_t$ by minimizing an MSE loss. 
In DDPM's inference time, 
starting from a random $\x_T \sim \mathcal{N}(\0,\I_n)$, 
one step of the ``reverse flow" that is used for generating a sample from $p_0$ is given by
\begin{align}
\label{eq:ddpm_step}
    \x_{t-1} = \frac{1}{\sqrt{\alpha_t}} \left ( \x_t - \frac{1-\alpha_t}{\sqrt{1-\bar{\alpha}_t}} \bepsilon_\theta(\x_t;t) \right ) + \sqrt{\beta_t}\bepsilon_t,
\end{align}
where $\bepsilon_t$ is drawn from $\mathcal{N}(\0,\I_n)$.

A faster generation scheme, denoted DDIM \citep{song2020denoising}, is based on diverging from the Markovian guideline of DDPM. In DDIM, the step \eqref{eq:ddpm_step} is replaced with
\begin{align}
\label{eq:ddim_step}
    \x_{t-1} = \sqrt{\bar{\alpha}_{t-1}} {\x}_{0|t} + \hat{\sigma}_t \bepsilon_\theta (\x_t;t) + \tilde{\sigma}_t \bepsilon_t,
\end{align}
where $\bepsilon_t \sim \mathcal{N}(\0,\I_n)$ %
and
\begin{align}
    \label{eq:sigma_ddim}
    \hat{\sigma}_t &= \sqrt{1-\bar{\alpha}_{t-1}-\tilde{\sigma}_t^2}, \\
    \label{eq:x0_est}
    {\x}_{0|t} &= \frac{1}{\sqrt{\bar{\alpha}_{t}}} \left (  \x_t - \sqrt{1-\bar{\alpha}_t} \bepsilon_\theta(\x_t;t) \right ).
\end{align}
The hyperparameter $\tilde{\sigma}_t$ allows trading between the levels of two noise terms: the estimated noise $\bepsilon_\theta (\x_t;t)$ and a pure stochastic noise $\bepsilon_t$.
Moreover, $\x_{0|t}$ in \eqref{eq:x0_est} is an estimate of $\x_0$ given $\x_t$, which is essentially performing Gaussian denoising: $\x_{0|t} = \frac{1}{\sqrt{\bar{\alpha}_t}}\mathcal{D}( \x_t; \sigma_t=\sqrt{1-\bar{\alpha}_t} )$, as implied by  \eqref{eq:xt_x0}.

Many
of the methods that use DDMs as priors for image restoration \citep{kawar2022denoising,song2022pseudoinverse,wang2022zero,zhu2023denoising} 
are based on the DDIM scheme \citep{song2020denoising}. %
Yet, in order that the sampling scheme will produce an image that is not only perceptually pleasing but also agrees with the measurements $\y$ and the observation model \eqref{eq:obsevation_model} it is required to equip the iterations with some data-fidelity guidance.
This guidance is typically based on the gradient of a data-fidelity term $\ell(\x;\y)$. In other words, the iterations \eqref{eq:ddim_step} are modified into 
\begin{equation}
\label{eq:ddim_step_restoration}
    \x_{t-1} = \sqrt{\bar{\alpha}_{t-1}} {\x}_{0|t} - \mu_t \nabla_{\x}  \ell( {\x}_{0|t} ;\y) + \hat{\sigma}_t \bepsilon_\theta (\x_t;t) + \tilde{\sigma}_t \bepsilon_t,
\end{equation}
where $\mu_t$ is the guidance scaling factor. %

Note the similarity between the PnP restoration \eqref{eq:ista_step2}-\eqref{eq:ista_step1} 
and the DDM restoration \eqref{eq:x0_est}-\eqref{eq:ddim_step_restoration}. 
Namely, ignoring the scaling of the signal by $\{ \bar{\alpha}_t \}$,
the main difference is the noise injection, composed of the estimated noise $\bepsilon_\theta$ and the random noise $\bepsilon_t$. 
The similarity of DDMs schemes to previous iterative denoising methods is further seen by the formulation of \citep{karras2022elucidating}, which decouples %
the diffusion ``forward" flows, used for training the denoisers, from the ``reverse" %
inference flows.
More discussion on such connections can be found in a recent review paper \citep{elad2023image}.

For example, the DDM restoration method in \citep{chung2022diffusion}, dubbed DPS, where the guidance is based on the gradient of the LS term, resembles \eqref{eq:ls_grad_step} with noise injection. 
Interestingly, just like its PnP counterpart, it requires a very large number of iterations (and thus many NFEs).
Alternatively, by using BP guidance in \eqref{eq:ddim_step_restoration} 
--- equivalently, when injecting noise to step \eqref{eq:bp_grad_step} in IDBP %
--- one gets the %
recent
DDNM method \citep{wang2022zero} (which is also similar to the method in \citep{song2022pseudoinverse}).
However, note that the DDNM paper focuses on noiseless settings and has difficulties in noisy settings (in Section~\ref{sec:exp} we state its technical limitation), which is aligned with the sensitivity of BP guidance to noise in $\y$, as mentioned above.

At this point, we would also like to mention the DDRM method \citep{kawar2022denoising}, 
which 
resembles the BP approach as it generates
``spectral space" measurements $\bar{\y}$ in a way similar to 
applying $\A^\dagger$ on $\y$. Yet, it
is more robust to observation noise as it requires access to the full SVD of $\A$, which allows it to mitigate noise amplification per singular component in each iteration.
In this paper we aim to devise an approach that does not require computing and storing the SVD of $\A$, which is %
impractical is most cases.

Finally, we note that the way we construct a sampling scheme from a deterministic algorithm shares similarity with the recent method DIffPIR \citep{zhu2023denoising}. 
However, while \citep{zhu2023denoising} uses an existing PnP baseline, here we propose a novel core reconstruction method. Also, empirically our new sampling scheme demonstrates better accuracy than DiffPIR (evaluated by PSNR) without compromising on perceptual quality (evaluated by LPIPS).

\section{The Proposed Method}
\label{sec:method}
In this section, we present our image restoration method that is more robust to observation noise than methods that use only back-projections %
while still having low computational complexity and clear benefits over methods that are based only on plain LS based guidance.
To this end, we present a novel core guidance approach, equip it with formal theoretical motivation, and utilize it to devise a new sampling-based reconstruction scheme.

{\em All the claims in this section are proved in the appendix.}

\subsection{Core approach}
\label{sec:method_core}

\tomt{
The idea for our approach comes from 
identifying \eqref{eq:bp_grad_step} as a specific instance of the formula 
\begin{equation}
\label{eq:quasi-precond}
\x_{t-1} =  {\x}_{0|t} - \mu_t \A^T \W(\A{\x}_{0|t}-\y)
\end{equation}
with $\W = (\A\A^T)^{-1}$ (assuming full rank $\A$) and $\mu_t=1$.
The plain LS step \eqref{eq:ls_grad_step} is obviously also an instance of %
this expression 
with $\W = \I_m$.}

\tomt{
Recall traditional preconditioning in optimization \citep{nocedal1999numerical}:
The optimization
 of an objective, e.g., $L:\mathbb{R}^n\to\mathbb{R}$ in \eqref{eq:traditional_cost}, is based on 
 gradient-based optimizers where
 the full gradient $\nabla L(\x)$ is multiplied by 
\tomtb{a (symmetric) positive definite} 
matrix $\P \in \mathbb{R}^{n \times n}$. 
Such practice does not affect the problem's minimizers (since $\P\nabla L(\x)=\0 \iff \nabla L(\x)=\0$ due to the invertibility of $\P$), and its goal is to ease the optimization landscape, \tomtb{characterized by the condition number of the effective Hessian $\P^{1/2}\nabla^2 L(\x)\P^{1/2}$}. 
In \eqref{eq:quasi-precond}, however, $\W\in\mathbb{R}^{m \times m}$ does not directly affect the signal's prior and for general $\W$ we may 
not have $\A^T\W(\A\x-\y)=\0 \iff \nabla \ell_{LS}(\x)=\A^T(\A\x-\y)=\0$.}

Nevertheless, we have the following claims. %

\begin{claim}
    Let $\A \in \mathbb{A}^{m \times n}$ with $m \leq n$ and $\mathrm{rank}(\A)=m$.
    Let $\W \in \mathbb{R}^{m \times m}$ %
    such that $\mathrm{rank}(\A^T\W)=m$.
    Then, %
    $$
    \A^T\W(\A\x-\y)=\0 \iff \A^T(\A\x-\y)=\0.
    $$
\end{claim}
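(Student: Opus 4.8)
The plan is to collapse both sides of the claimed equivalence onto the single condition that the residual $\r := \A\x-\y$ vanishes. The key observation is that both $\A^T$ and $\A^T\W$ are matrices in $\mathbb{R}^{n\times m}$ with $m\leq n$ that have full column rank: indeed $\mathrm{rank}(\A^T)=\mathrm{rank}(\A)=m$ by hypothesis, while $\mathrm{rank}(\A^T\W)=m$ is assumed directly. A matrix of full column rank has trivial null space and is therefore injective as a linear map on $\mathbb{R}^m$, so multiplying a vector $\r\in\mathbb{R}^m$ by either of these matrices kills it only if $\r=\0$.

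Concretely, I would first rewrite the two equations as $\A^T\W\r=\0$ and $\A^T\r=\0$. Since $\A^T\W$ is injective, $\A^T\W\r=\0$ forces $\r=\0$; and since $\A^T$ is injective, $\A^T\r=\0$ likewise forces $\r=\0$. Conversely, $\r=\0$ trivially makes both left-hand sides zero. Hence each of the two conditions is equivalent to $\r=\0$, i.e., to $\A\x=\y$, and therefore the two conditions are equivalent to one another, which is exactly the assertion.

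The only ingredients needed are standard linear algebra facts: that $\mathrm{rank}(\A^T)=\mathrm{rank}(\A)$, and that an $n\times m$ matrix of rank $m$ (with $m\le n$) has a trivial kernel. Consequently there is no genuine obstacle here; the substance of the statement is simply that the hypothesis $\mathrm{rank}(\A^T\W)=m$ is precisely what makes $\A^T\W$ inherit the injectivity of $\A^T$. I would also point out that this clarifies why the rank condition on $\A^T\W$ is assumed rather than invertibility of $\W$: invertibility of $\W$ would imply $\mathrm{rank}(\A^T\W)=\mathrm{rank}(\A^T)=m$ and thus suffice, but it is not necessary for the argument, which only uses injectivity of the product.
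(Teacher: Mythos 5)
Your proposal is correct and follows essentially the same route as the paper's proof: both arguments reduce each condition to the residual $\A\x-\y=\0$ by exploiting that $\A^T$ and $\A^T\W$ have full column rank $m$, the paper realizing the injectivity explicitly via left-multiplication by the pseudoinverses $(\A\A^T)^{-1}\A$ and $(\A^T\W)^\dagger$, while you invoke the trivial kernel directly. Your closing remark that invertibility of $\W$ is sufficient but not necessary is a fine observation consistent with how the paper states the hypothesis.
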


\begin{claim}
    Let $\A \in \mathbb{A}^{m \times n}$ with $m \leq n$. 
    Let $\W \in \mathbb{R}^{m \times m}$ be a positive definite matrix that shares eigenbasis with $\A\A^T$. Then, there exists a positive definite $\P\in\mathbb{R}^{n \times n}$ such that 
    $$
    \A^T\W\A = \P^{1/2}\A^T\A\P^{1/2}.
    $$
\end{claim}

Therefore, if $\A$ is full rank, then any $\W$ for which $\A^T\W$ is full rank (e.g., $\W=(\A\A^T)^{-1}$) can be interpreted as a preconditioner of $\nabla \ell_{LS}$ (rather than of the full objective that may include additional terms).
Alternatively, even if $\A$ is not full rank, but $\W$ shares eigenbasis with $\A\A^T$ (e.g., $\W=(\A\A^T+\eta\I_m)^{-1}$), 
we allow ourselves to refer to $\W$ as a preconditioner of $\nabla \ell_{LS}$ since it has the same effect on the effective Hessian, $\P^{1/2}\nabla^2 \ell_{LS}(\x)\P^{1/2}$, as a traditional preconditioner $\P$.

Inspired by this, in lieu of \eqref{eq:ista_step1} we propose a data-fidelity guidance based on {\em iteration-dependent} preconditioner:
\begin{align}
\label{eq:irls_step}
\x_{t-1} &=  {\x}_{0|t} - \mu_t \A^T \W_t(\A{\x}_{0|t}-\y)
\end{align}
with a ``smooth" discretization of the path of symmetric preconditioners $\{\W_t\}$ --- invertible with eigenbasis shared with $\A\A^T$ --- that starts roughly at $\W_T \propto (\A\A^T+\eta\I_m)^{-1}$ and ends roughly at $\W_0 \propto \I_m$.

In words, the proposed guidance is based on gently traversing from the (regularized) BP step to the LS step along the restoration scheme.
Thus, it enjoys benefits of BP steps (stronger consistency with $\y$, accelerated convergence, etc.) in early iterations and the better robustness to noise of LS steps as the scheme approaches its end.

Potentially, a sequence of preconditioners $\{ \W_t \}$ might be learned in a data-driven manner under some optimality criterion. 
Yet, this is expected to come at the price of losing their generality to arbitrary data.
Thus, here we propose an engineered, but simple and effective, update rule of $\W_t$ 
that, as we will show empirically,  
already demonstrates the usefulness of our new concept.
Specifically, the %
rule that %
we propose
is given by the following convex combination:
\begin{align}
\label{eq:wt_rule}
    \W_t = (1-\delta_t)(\A\A^T+\eta\I_m)^{-1} + \delta_t %
    \tomtb{c}\I_m
\end{align}
where $0 \leq \delta_T<...<\delta_0 \leq 1$ is a predefined sequence, %
and $c$ is a positive scalar that balances the two terms.

In general, the main motivation for %
\eqref{eq:wt_rule} 
is that it both obeys the previously stated %
conditions %
and also yields a data-fidelity guidance that is very simple to implement.
Indeed, %
plugging \eqref{eq:wt_rule} in \eqref{eq:irls_step} yields
\begin{align}
\label{eq:simple_rule}
    \x_{t-1} &= {\x}_{0|t} - \mu_t\left ( (1-\delta_t)\g_{BP}({\x}_{0|t}) + \delta_t \g_{LS}({\x}_{0|t}) \right ),   \nonumber \\ 
    &=: {\x}_{0|t} - \mu_t \, \g_{\delta_t}({\x}_{0|t})
\end{align}
where
\begin{align}
    \g_{BP}({\x}_{0|t}) &= \A^T(\A\A^T+\eta\I_m)^{-1}(\A\x_{0|t} - \y), \\
    \g_{LS}({\x}_{0|t}) &= %
    c \A^T(\A\x_{0|t} - \y),
\end{align}
can be computed efficiently for a wide range of observation models. 

In Section~\ref{sec:hyperparams} we will discuss practical implementations details for simple hyperparameter tuning (e.g., a simple way to tune $\{ \delta_t \}$).
As for setting $c$, observe that 
\eqref{eq:irls_step} is essentially a gradient step (at $\x_{0|t}$) of the weighted least squares (WLS) data-fidelity term:
\begin{align}
\label{eq:irls_term}
\ell_{WLS,t}(\x;\y) = \frac{1}{2}\|\W_t^{1/2}(\A\x-\y)\|_2^2,
\end{align}
for which we have the following property.

\begin{claim}
    Denote by $\lambda_1$ the largest singular value of $\A$. Let $c \leq 1/\lambda_1^2$. Then, the update \eqref{eq:simple_rule} with $\mu_t=1$ ensures reduction in \eqref{eq:irls_term}.
\end{claim}

Many popular degradation models obey $\lambda_1 \lesssim 1$ (e.g., in deblurring $\lambda_1=1$ and in super-resolution $\lambda_1$ is moderately lower than 1). 
Thus, following the claim, %
\tomtm{unless stated otherwise, we use $c=1$ and $\mu_t=1$.}

To conclude this subsection, the core of the proposed algorithm is based on iteratively computing \eqref{eq:ista_step2} and \eqref{eq:simple_rule}. %
Without noise injection, this is essentially a novel PnP scheme, which we call Iterative Denoising and Preconditioned Guidance (IDPG). 
As we will empirically show in Section~\ref{sec:exp}, while this method has good PSNR performance, turning it into a DDM-based sampling scheme leads to better recoveries, especially in terms of perceptual quality.
Before presenting the sampling scheme in Section~\ref{sec:method_sampling}, let us provide some theoretical motivation for our preconditioned guidance approach.

\subsection{Theory}

Let us state some mathematical claims that motivate the core guidance approach. %
In this subsection we consider a simplified setting of having a fixed $0<\delta<1$, i.e., a fixed WLS data-fidelity term \eqref{eq:irls_term} with $\W$ given as in \eqref{eq:wt_rule} but with $\delta_t=\delta$. 
We examine the effects of this WLS, compared to LS and BP fidelity terms, on an estimator of $\x^*$, denoted by $\hat{\x}$, which is obtained by minimizing \eqref{eq:traditional_cost} with {\em Tikhonov  prior} $s(\x)=\frac{\beta}{2}\|\D\x\|_2^2$, where $\beta>0$ and $\D^T\D$ is invertible, so, the estimator $\hat{\x}$ per data-fidelity term is unique. %
The MSE of $\hat{\x}$ (conditioned on $\x^*$, averaged over the noise $\e$) can be decomposed into two terms: squared bias, $b^2$ (independent of $\e$), and variance, $v$ (depends on $\sigma_e^2$), given by
\begin{align}
    b^2 &:= \|\mathbb{E}[\hat{\x}|\x^*] -\x^*\|_2^2, \\
    v &:= \mathrm{Tr}(\mathrm{Var}(\hat{\x}|\x^*)).
\end{align}

The motivation for analyzing this setting (even though in the proposed core procedure \eqref{eq:ista_step2} \& \eqref{eq:simple_rule}, we modify $\delta_t$ along the iterations and do not attempt to reach a minimizer per $\delta_t$)
is that insights on gains in bias 
and variance 
at the minimizer level hint that similar effects are possible even with limited optimizer iterations per $\delta_t$.
Also, in \citep{tirer2019back}, it was shown that insights gained 
for Tikhonov  prior
generalize empirically to more complex priors. 
Specifically, they showed 
that BP typically leads to smaller bias but higher variance %
than LS.
In what follows,
the intuition that %
$0<\delta<1$ 
enables trading bias and variance is made formal. %

\begin{theorem}
\label{thm:bias_var}
    Consider the observation model \eqref{eq:obsevation_model} and estimating $\x^*$ via minimization of \eqref{eq:traditional_cost} with $s(\x)=\frac{\beta}{2}\|\D\x\|_2^2$. 
    Assume that: (a) $\A^T\A$ and $\D^T\D \succ 0$ share eigenbasis; (b) the singulars value of $\A$ are in $(0,1]$, and not all equal (common case); (c) $\eta=0$ and $c=1$. %
    Then, $b_{\text{BP}}<b_{\text{WLS}}<b_{\text{LS}}$, and $v_{\text{LS}}<v_{\text{WLS}}<v_{\text{BP}}$.
\end{theorem}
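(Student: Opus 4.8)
The plan is to exploit the fact that under the Tikhonov prior the estimator is available in closed form, and that assumptions (a)--(c) make the whole problem simultaneously diagonalizable, so the comparison collapses to a scalar, per-singular-value inequality. First I would write the minimizer of \eqref{eq:traditional_cost} with $\ell=\ell_{WLS}$. Its Hessian $\A^T\W\A+\beta\D^T\D$ is positive definite (since $\W\succ0$ and $\D^T\D\succ0$), so the unique minimizer is $\hat{\x}=(\A^T\W\A+\beta\D^T\D)^{-1}\A^T\W\y$. Substituting $\y=\A\x^*+\e$ and taking the expectation over $\e$ gives $\mathbb{E}[\hat{\x}\,|\,\x^*]=\M\x^*$ with $\M:=(\A^T\W\A+\beta\D^T\D)^{-1}\A^T\W\A$, whence $b^2=\|(\M-\I)\x^*\|_2^2$; similarly the centered estimator equals $\N\e$ with $\N:=(\A^T\W\A+\beta\D^T\D)^{-1}\A^T\W$, so $v=\sigma_e^2\,\mathrm{Tr}(\N\N^T)$.

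Next I would diagonalize everything in a common basis. Writing the SVD $\A=\U\bSigma\V^T$ with singular values $\sigma_1,\dots,\sigma_m$, assumption (a) gives $\D^T\D=\V\diag(d_i)\V^T$ in the same right-singular basis $\V$ (with $d_i>0$ since $\D^T\D\succ0$), and since $\W$ is a function of $\A\A^T$ it is diagonalized by the left-singular basis, $\W=\U\diag(\omega_i)\U^T$. Then $\A^T\W\A=\V\diag(w_i)\V^T$ with $w_i=\sigma_i^2\omega_i$ on the row space and $0$ on the null space, and both $\M$ and $\N\N^T$ become diagonal in $\V$. A short computation reduces the two quantities to the scalar sums
\begin{align}
b^2 &= \sum_{i}\left(\frac{\beta d_i}{w_i+\beta d_i}\right)^2(\V^T\x^*)_i^2, \nonumber\\
v &= \sigma_e^2\sum_{i=1}^m\frac{1}{\sigma_i^2}\left(\frac{w_i}{w_i+\beta d_i}\right)^2, \nonumber
\end{align}
where I used $\sigma_i\omega_i=w_i/\sigma_i$, so the entire comparison is governed, coordinate by coordinate, by the single number $w_i$.

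It then remains to order the $w_i$ for the three fidelity terms and invoke monotonicity. With $\eta=0$ and $c=1$ one has $w_i^{\text{LS}}=\sigma_i^2$, $w_i^{\text{BP}}=1$, and $w_i^{\text{WLS}}=(1-\delta)+\delta\sigma_i^2$, so that $\sigma_i\in(0,1]$ forces $w_i^{\text{LS}}\le w_i^{\text{WLS}}\le w_i^{\text{BP}}$ coordinate-wise, with all inequalities strict whenever $\sigma_i<1$. The bias coefficient $\beta d_i/(w_i+\beta d_i)$ is strictly decreasing in $w_i$, while the variance coefficient $\sigma_i^{-2}\,(w_i/(w_i+\beta d_i))^2$ is strictly increasing in $w_i$; summing these monotone relations over coordinates yields $b_{\text{BP}}<b_{\text{WLS}}<b_{\text{LS}}$ and $v_{\text{LS}}<v_{\text{WLS}}<v_{\text{BP}}$.

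The genuinely delicate points, rather than the algebra, are two. First, the simultaneous diagonalization must keep straight which factors live in $\U$ (namely $\W$ and $\A\A^T$, of dimension $m$) versus $\V$ (namely $\D^T\D$ and $\A^T\A$, of dimension $n$) and handle the rectangular $\bSigma$ and the null space of $\A$ carefully; it is precisely assumption (a) together with $\W$ being a function of $\A\A^T$ that makes this factorization possible. Second, strictness relies on assumption (b): ``not all singular values equal'' together with $\sigma_i\le1$ guarantees at least one index with $\sigma_i<1$, which suffices for the variance chain; for the bias chain one additionally needs $\x^*$ to have a nonzero component $(\V^T\x^*)_i$ along such a direction, i.e., the strict bias ordering holds for the generic $\x^*$ (the ``common case'' referred to in the statement).
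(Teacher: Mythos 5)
Your proposal is correct and follows essentially the same route as the paper's proof: the same closed-form Tikhonov minimizer, the same simultaneous SVD diagonalization reducing bias and variance to per-coordinate sums governed by the effective weights $w_i=\lambda_i^2 s_i$ (your $\sigma_i^2\omega_i$), the same coordinate-wise ordering $w_i^{\text{LS}}\le w_i^{\text{WLS}}\le w_i^{\text{BP}}$ with $c=1$, $\eta=0$, and the same scalar monotonicity argument for the bias and variance coefficients. Your closing remark that the \emph{strict} bias ordering additionally requires $(\V^T\x^*)_i\neq 0$ at some index with $\sigma_i<1$ (i.e., genericity of $\x^*$) is a point the paper's proof passes over silently, so your version is, if anything, slightly more careful.
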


Note that the margins in the theorem's inequalities depend on where $\delta$ is located in $(0,1)$ (e.g., $v_{\text{WLS}} \to v_{\text{LS}}$ for $\delta \to 1$). %
This further motivates having a set $\{\delta_t\}$ instead of a fixed $\delta$ in \eqref{eq:wt_rule} to gain 
flexibility in handling the error's bias/variance. %

As for the convergence rates of first-order optimization algorithms (which affect the number of NFEs with practical DNN-based priors), recall that it can typically be characterized by the condition number, $\kappa(\cdot)$, of the problem's Hessian, which equals the ratio between the largest and smallest eigenvalues of the Hessian --- the lower the better.
Note, though, that for $m<n$, the Hessian of each of the three data-fidelity terms (LS, BP, WLS) is rank-deficient due to the ill-posedness of the observation model.
Nevertheless, we have the following property on the component of their Hessian in the row-range of $\A$ (a subspace in $\mathbb{R}^n$).

\begin{claim}
    Assume that $\mathrm{rank}(\A)=m$, the singular values of $\A$ are not all equal, $\eta=0$, and denote by $\V\in\mathbb{R}^{n \times m}$ an orthonormal basis for the row-range of $\A$. %
    We have that
    $$
    \kappa(\V^T \nabla_\x^2\ell_{BP} \V) <     \kappa(\V^T \nabla_\x^2\ell_{WLS} \V) <     \kappa(\V^T \nabla_\x^2\ell_{LS} \V).
    $$
\end{claim}

Namely, the flexibility in $\delta$ allows accelerating the reconstruction procedure compared to using only a LS based guidance.
\tomtr{Leveraging this result to 
establish ranking claims for specific algorithms
may follow the approaches of \citep{tirer2021convergence} (based on constraints on the null space of $\A$ that are implicitly imposed by the signal prior term).}

\begin{algorithm}[t]
    \caption{Denoising Diffusion with iterative Preconditioned Guidance (DDPG)}
    \label{alg:DDPG}
    \begin{algorithmic}[1]
        \Require $\bepsilon_\theta(\cdot,t)$ (noise estimator), $T$, $\y$, $\A$, $\{ \bar{\alpha}_t \}$, $\{\mu_t\}$, \tomtm{$c$}, $\eta$, $\gamma$, $\zeta$

	\If {$\sigma_e > 0$}
		\State 
            $\delta_t = \bar{\alpha}_t^\gamma$ and $w_t = \delta_t$
        \Else
            \State
            $\delta_t = 0$ and $w_t = 1$
	\EndIf
        
        \State
        Initialize $\x_{T} \sim \mathcal{N}(\mathbf{0}, \I_n)$ %
        \For {$t$ from $T$ to $1$}
            \State ${\x}_{0|t} = \frac{1}{\sqrt{\bar{\alpha}_t}} \left ( \x_t - \sqrt{1-\bar{\alpha}_t}\bepsilon_\theta(\x_t,t) \right )$

            \State
            $\g_{BP} = \A^T(\A\A^T+\eta\I_m)^{-1}(\A\x_{0|t} - \y)$
            \State
            $\g_{LS} = %
            \tomtm{c}\A^T(\A\x_{0|t} - \y)$
            
            \State 
            $\tilde{\x}_{t-1} = {\x}_{0|t} - \mu_t\left ( (1-\delta_t)\g_{BP} + \delta_t \g_{LS} \right )$
            
            \State
            $\hat{\bepsilon}_t = \frac{1}{\sqrt{1-\bar{\alpha}_t}}(\x_t - \sqrt{\bar{\alpha}_t} \tilde{\x}_{t-1} )$
            \State
            $\bepsilon_t \sim \mathcal{N}(\mathbf{0}, \I_n)$

            \State
            $\x_{t-1} = $
            \State
            $
            \sqrt{\bar{\alpha}_{t-1}} \tilde{\x}_{t-1} + \sqrt{1-\bar{\alpha}_{t-1}} \left ( \tomtm{w_t } \sqrt{1-\zeta} \hat{\bepsilon}_t + \sqrt{\zeta} \bepsilon_t  \right )$
            
        \EndFor \\
        \Return $\x_0$
    \end{algorithmic}
\end{algorithm}

\subsection{Sampling scheme}
\label{sec:method_sampling}

As previously mentioned, 
in this subsection we turn IDPG (alternating \eqref{eq:ista_step2} and \eqref{eq:simple_rule})
into a DDM-based sampling scheme that empirically
leads to better recoveries, especially in terms of perceptual quality. 
Injecting noise to the estimate in a certain iteration leads to an image that better matches the data on which the denoiser has been trained. Intuitively, this should improve the denoiser's performance and also allow it to mitigate error propagation along the iterations, by masking artifacts with noise that will be removed. Accordingly, it allows using a smaller value of the regularization parameter $\eta$ when handling noisy $\y$, which yields results with  
sharper details. %

The sampling scheme that we propose is based on modification of the DDIM guided scheme, stated in \eqref{eq:ddim_step_restoration}. 
Namely, $\x_{0|t}$ is computed using a DDM's noise estimator as in \eqref{eq:x0_est} and the novel guidance $\g_{\delta_t}$ is being used in lieu of the abstract gradient $\nabla_\x\ell$.
However, we propose several important modifications of the additive noise terms $\hat{\sigma}_t \bepsilon_\theta (\x_t;t) + \tilde{\sigma}_t \bepsilon_t$ that appear in \eqref{eq:ddim_step_restoration}.

The first modification is inspired by \citep{zhu2023denoising}.
Note that, contrary to the unconditional image generation task, the data-fidelity guidance can significantly modify the estimated signal $\tilde{\x}_{t-1}:=\x_{0|t}-\mu_t\g_{\delta_t}$ compared to $\x_{0|t}$. 
Therefore, the {\em effective} predicted noise can be significantly different than $\bepsilon_\theta (\x_t;t)$.
Accordingly, based on the relation between $\bepsilon_\theta (\x_t;t)$ and $\x_{0|t}$ in \eqref{eq:x0_est}, the effective predicted noise takes the form of
\begin{align}
\hat{\bepsilon}_t = \frac{1}{\sqrt{1-\bar{\alpha}_t}}(\x_t - \sqrt{\bar{\alpha}_t} \tilde{\x}_{t-1} ),
\end{align}
and we use it instead of $\bepsilon_\theta (\x_t;t)$ for generating $\x_{t-1}$.

Similarly to this prior work,
we also simplify the way of trading between the (effective) estimated noise and the stochastic noise. Instead of time-depended noise level $\tilde{\sigma}_t$ which affects $\hat{\sigma}_t=\sqrt{1-\bar{\alpha}_{t-1}-\tilde{\sigma}_t^2}$ %
in a complicated way, 
we balance between the levels of $\hat{\bepsilon}_t$ and $\bepsilon_t$ using the simple to tune weight $\sqrt{1-\zeta}$ and $\sqrt{\zeta}$, respectively, with a single hyperparameter $\zeta \in [0,1]$.

Our key novel modification is the following.
In the case of observation noise (i.e., $\sigma_e>0$), 
\tomtm{the effective predicted noise $\hat{\bepsilon}_t$}  
may not be useful  
in iterations where \tomtm{the overall guidance} $\g_{\delta_t}$ is dominated by $\g_{BP}$,
due to the sensitivity to noise of the BP guidance.
Therefore, we scale it 
by $\delta_t$ --- the weight that is given to $\g_{LS}$, which is more robust to observation noise. 

We remark that our sampling scheme transfers smoothly to the noiseless case, where one can simply use pure BP guidance -- or equivalently $\delta_t=0$:
it is only required to omit the last modification (i.e., avoid attenuating the noise injection by $\delta_t$).
As will be shown empirically, 
this scheme performs on par with BP-based DDM methods that are specifically devised for noiseless observations, and outperforms other methods.

The resulted %
DDM-based sampling
algorithm, that we name Denoising Diffusion with iterative Preconditioned Guidance (DDPG), is summarized in Algorithm~\ref{alg:DDPG}.
The algorithm includes a simple way to set $\{ \delta_t \}$ that is explained in the next subsection.

\subsection{Implementation details}
\label{sec:hyperparams}

In the experimental section, we wish to test the approach (both DDPG and its core version IDPG) with powerful DDM's denoisers trained and used in  \citep{lugmayr2022repaint,guidedDiff}. 
These models are based on DDPM formulation \citep{ho2020denoising,song2020denoising}, that %
we use to reduce
the hyperparameter tuning effort in all our examined tasks, and in particular setting $\{ \delta_t \}$ for noisy observations. (Recall that in the noiseless case, we simply set $\delta_t=0$).

As presented in Section~\ref{sec:background}, in DDPM there %
are the parameters $\{ \bar{\alpha}_t \}$, 
which are determined by the hyperparameters $\{\beta_t\}$, and determine many other parameters (e.g., the model's noise levels $\{ \sigma_t \}$). 
These $\{ \bar{\alpha}_t \}$ obey
$0_+ \approx \bar{\alpha}_T<...<\bar{\alpha}_0=1$. 
We do not change 
$\{\beta_t\}$ and $\{ \bar{\alpha}_t \}$
at all %
compared to previous methods that we compare with \citep{wang2022zero,kawar2022denoising}.
In fact, we use $\{ \bar{\alpha}_t \}$ to set
$\delta_t = \bar{\alpha}_t^\gamma$ 
with a single positive scalar hyperparameter $\gamma$.
This ensures that as $t$ decreases, $\{ \delta_t\}$ is monotonically increasing from 0 to 1 in rate that can be controlled by $\gamma$.
Observe that smaller $\gamma$ yields higher intermediate $\bar{\alpha}_t^\gamma$ and thus a larger portion of $\{\W_t\}$ closer to identity matrix --- or, equivalently, ``faster" progress from BP to LS. This is beneficial for larger observation noise $\sigma_e$. 

To conclude, %
\tomtm{when using $c=1$ and $\mu_t=1$,} 
applying our approach requires tuning only $\eta$, $\gamma$ and $\zeta$ (for DDPG).
Note that, potentially, tuning more hyperparameters for our method can further boost its results.

\section{Experiments}
\label{sec:exp}

In this section, 
we examine the performance of our approach.
We start with %
examining 
our core guidance, as reflected in IDPG, compared to its extreme cases: deterministic methods with pure BP and LS based guidance.
Then, we compare both IDPG and DDPG to existing methods: DDRM \cite{kawar2022denoising}, \tomtr{DPS} \cite{chung2022diffusion}, DDNM \cite{wang2022zero}, and DiffPIR \cite{zhu2023denoising}.
We consider the CelebA-HQ 1K and ImageNet 1K test sets, where for fair comparison all the methods use the same pretrained DDM's denoisers: the model from \citep{lugmayr2022repaint} trained on CelebA-HQ and the (unconditional) model from \cite{guidedDiff} trained on ImageNet.
We use $T=100$ iterations for each of the methods as in \citep{wang2022zero}.
\tomtr{An exception is DPS, which requires $T=1000$ iterations with much larger per-iteration complexity, making it extremely slow compared to others.\footnote{Using the same hardware, DPS takes a dozen minutes per image compared to a couples of seconds of the other methods, and limiting it to 100 iterations fails to produce meaningful recoveries.}}
All the methods are initialized with the same random $\x_T$.

We consider image deblurring and super-resolution tasks that have been widely examined in the previous works, and for each method we use the hyperparameter settings suggested by its authors. The settings for our methods are stated in the appendix. 
Specifically, we consider super-resolution with bicubic downsampling and scale factor 4 and deblurring with Gaussian kernel as, e.g., in \citep{kawar2022denoising,wang2022zero}.
However, we also examine challenging noisy deblurring with random motion kernels, as in \cite{zhu2023denoising}.
This is an example for a common task that cannot be handled by DDRM \cite{kawar2022denoising}, which is limited to separable kernels for which the SVD of $\A$ can be efficiently computed and stored.
This task cannot be addressed also by DDNM \cite{wang2022zero}, which fails to tackle noisy observations.\footnote{Note that DDNM is tailored for $\sigma_e=0$. 
We remark that DDNM+ that was proposed in \citep{wang2022zero} for handling noisy $\y$ (via SVD) seems to be 
heavily tied to a specific downsampling task (without bicubic kernel) and does not support the considered tasks, \tomtf{as shown in the 
appendix.} 
Our efforts to adapt it failed.} 
We examine the noisy cases also in the two aforementioned tasks.

\tomtr{In the appendix, 
we demonstrate the advantages of our approach over recent task-specific DNNs,  
\tomtf{as well as its superior performance also with lower noise level than demonstrated here.} 
In addition, we report there the impressive performance of our approach for sparse-view computed tomography (another task where computing the SVD of $\A$ is infeasible).}

\begin{table}
\scriptsize %
\renewcommand{\arraystretch}{1.3}
\caption{\small Data-fidelity guidance ablation: super-resolution and deblurring PSNR [dB] ($\uparrow$) and LPIPS ($\downarrow$) results on CelebA-HQ 1K.} \label{table:results_celeba_ablation}
\centering
    \begin{tabular}{ | c || c | c | c | c | c | c | c |}
    \hline
 Task  & PGM-LS & IDBP & IDPG (ours)  \\ \hline \hline

    Bicub.~SRx4~$\sigma_e$=0 &  32.40 / 0.127 & {\bf 32.66} / {\bf 0.111} & {\bf 32.66} / {\bf 0.111}  \\ \hline

    Bicub.~SRx4~$\sigma_e$=0.05 &  28.91 / 0.209 & 28.95 / 0.229 & {\bf 29.89} / {\bf 0.155}  \\ \hline

    Gauss.~Deb.~$\sigma_e$=0 &  32.25 / 0.141 & {\bf 45.58} / {\bf 0.002} & {\bf 45.58} / {\bf 0.002}  \\ \hline
    
    Gauss.~Deb.~$\sigma_e$=0.05 &   28.61 / 0.191 & 30.55 / {\bf 0.150} & {\bf 31.08} / {\bf 0.150}  \\ \hline
    
    Gauss.~Deb.~$\sigma_e$=0.1 & 27.89 / 0.239 & 28.29 / 0.226 & {\bf 29.28} / {\bf 0.146}  \\ \hline    
    
    \end{tabular}
\end{table}

\begin{figure}
    \center
    \begin{subfigure}[h]{0.25\columnwidth}
        \centering
        \includegraphics[width=\textwidth]{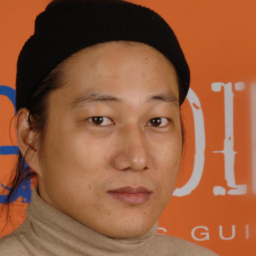}
    \end{subfigure} %
    \begin{subfigure}[h]{0.25\columnwidth}
        \centering
        \includegraphics[width=\textwidth]{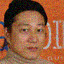}
    \end{subfigure} %
    \begin{subfigure}[h]{0.25\columnwidth}
        \centering
        \includegraphics[width=\textwidth]{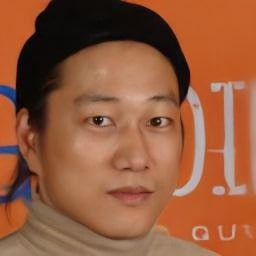}
    \end{subfigure} \hspace{-1mm} \\
    
    \begin{subfigure}[h]{0.25\columnwidth}
        \centering
        \includegraphics[width=\textwidth]{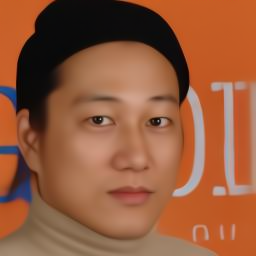}
    \end{subfigure}  %
    \begin{subfigure}[h]{0.25\columnwidth}
        \centering
        \includegraphics[width=\textwidth]{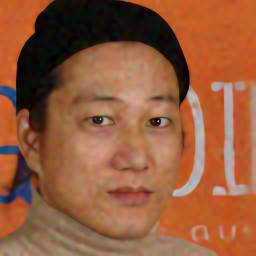}
    \end{subfigure}  %
    \begin{subfigure}[h]{0.25\columnwidth}
        \centering
        \includegraphics[width=\textwidth]{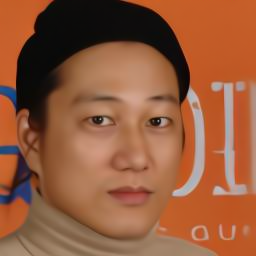}
    \end{subfigure}
    \vspace{-1mm}
    \caption{
    SRx4 with noise level 0.05. Top row, from left to right: original, upsampled observation, and our DDPG.
    Bottom row, from left to right: PGM-LS, IDBP, and our IDPG (baseline for DDPG).}
    \label{fig:SRx4_0.05_core}
    \vspace{-1mm}
\end{figure}

\begin{table*}
\scriptsize %
\renewcommand{\arraystretch}{1.3}
\caption{Super-resolution and deblurring PSNR [dB] ($\uparrow$) and LPIPS ($\downarrow$) results on CelebA-HQ 1K. N/A marks applicability limitation of: (1) DDNM to noiseless settings and (2) DDRM to settings where the SVD is given and stored. (More details in the text).} %
\label{table:results_celeba}
\vspace{-1mm}
\centering
    \begin{tabular}{ | c || c | c | c | c | c | c | c |}
    \hline
 \diagbox[height=2em,width=10em]{Task}{Method}  & DDRM & DPS (1000 NFEs) 
 & DiffPIR & DDNM & 
 IDPG (ours) & DDPG (ours) \\ \hline \hline

    Bicub.~SRx4~$\sigma_e$=0 &  31.64 / 0.054  
    & 29.39 / 0.065
    & 30.26 / 0.051 & 31.64 / {\bf 0.048} & 
    {\bf 32.66} / 0.111  & 31.60 / 0.052 \\ \hline

    Bicub.~SRx4~$\sigma_e$=0.05 &  29.26 / 0.090 
    & 27.49 / 0.086 
    & 27.44 / {\bf 0.085} & N/A & 
    {\bf 29.89} / 0.155 & 29.39 / 0.105  \\ \hline

    Gauss.~Deb.~$\sigma_e$=0 & 42.49 / 0.006 
    & 31.25 / 0.055
    & 32.97 / 0.041 & 45.56 / {\bf 0.002} & 
    {\bf 45.58} / {\bf 0.002} & 45.46 / {\bf 0.002}  \\ \hline
    
    Gauss.~Deb.~$\sigma_e$=0.05 &  30.53 /  0.074 
    & 27.75 / 0.084
    & 28.89 / 0.074 & N/A & 
    {\bf 31.08} / 0.150 & 30.41 / {\bf 0.068}  \\ \hline
    Gauss.~Deb.~$\sigma_e$=0.1 &  28.79 / 0.088 
    & 26.67 / 0.097
    & 27.59 / 0.083 & N/A & 
    {\bf 29.28} / 0.146 & 29.18 / {\bf 0.080}  \\ \hline

    Motion Deb.~$\sigma_e$=0.05 &  N/A 
    & 19.63 / 0.227
    & 27.96 / 0.102 & N/A & 
     {\bf 29.73} / 0.134  & 29.02 / {\bf 0.082}\\ \hline
    Motion Deb.~$\sigma_e$=0.1 &  N/A 
    & 19.64 / 0.231
    & 26.23 / 0.132 & N/A & 
     {\bf 27.86} / 0.166 & 27.74 / {\bf 0.099}\\ \hline    
    \end{tabular}
    \vspace{-1mm}

\vspace{5mm}

\scriptsize %
\renewcommand{\arraystretch}{1.3}
\caption{\small Super-resolution and deblurring PSNR [dB] ($\uparrow$) and LPIPS ($\downarrow$) results on Imagenet 1K. N/A marks applicability limitation of: (1) DDNM to noiseless settings and (2) DDRM to settings where the SVD is given and stored. (More details in the text).} %
\label{table:results_imagenet}
\vspace{-1mm}
\centering
    \begin{tabular}{ | c || c | c | c | c | c | c | c |}
    \hline
 \diagbox[height=2em,width=10em]{Task}{Method}  & DDRM & DPS (1000 NFEs) 
 & DiffPIR & DDNM & %
 IDPG (ours) & DDPG (ours) \\ \hline \hline

    Bicub.~SRx4~$\sigma_e$=0 &  %
    27.38 / 0.270%
    & 25.56 / 0.236
    & 26.99 / {\bf 0.225} & {\bf 27.45} / 0.245 %
    & 27.20 / 0.326  & 27.41 / 0.255 \\ \hline

    Bicub.~SRx4~$\sigma_e$=0.05 &   25.54 / 0.333 %
    & 24.05 / {\bf 0.271}
    & 24.65 /  0.318 & N/A %
    & 25.51 / 0.411 & {\bf 25.55} / 0.354  \\ \hline

    Gauss.~Deb.~$\sigma_e$=0 & 40.53 / 0.008 %
    & 25.54 / 0.259
    & 30.54 / 0.082 & 43.64 / 0.003 %
    & 44.02 / {\bf 0.002} & {\bf 44.21} / {\bf 0.002}  \\ \hline
    
    Gauss.~Deb.~$\sigma_e$=0.05 &  27.71 / 0.243 %
    & 23.59 / 0.294
    & 26.64 / 0.240 & N/A %
    & 27.47 / 0.313 & %
    {\bf 27.73} / {\bf 0.205}  %
    \\ \hline

    Motion Deb.~$\sigma_e$=0.05 &  N/A 
    & 17.52 / 0.468
    & 25.34 / 0.284 & N/A & 
     {\bf 26.02} / 0.354 &   25.94 / {\bf 0.249}  \\ \hline

    \end{tabular}
    \vspace{-1mm}
\end{table*}

\subsection{Examining the core approach}
\label{sec:exp_core}

As the main contribution of our paper is the novel guidance approach that smoothly traverses from BP to LS guidance, we start with examining our baseline IDPG method (alternating \eqref{eq:ista_step2} and \eqref{eq:simple_rule}) with
IDBP \citep{tirer2018image} ($\delta_t=0$) and a method with LS based updates ($\delta_t=1$), denoted here by PGM-LS (a common PnP scheme \cite{kamilov2023plug}). This comparison essentially provides an ablation study for our core approach.

The results for bicubic super-resolution and Gaussian deblurring with various and without noise for the CelebA-HQ 1K test set are presented in Table~\ref{table:results_celeba_ablation}. 
They show that IDPG consistently outperforms IDBP and PGM-LS. %
Figure~\ref{fig:SRx4_0.05_core} displays qualitative results, showing that IDPG has less artifacts than IDBP and finer details than PGM-LS.
Yet, while IDPG provides high PSNR values its perceptual quality can be improved by using the sampling-based DDPG, as shown in this figure.

\subsection{Comparison with other methods}
\label{sec:exp_comparison}

We turn to compare the proposed approach to DDRM, \tomtr{DPS}, DDNM and DiffPIR.
We consider all deblurring and super-resolution tasks that have been described above with various noise levels. 
The results for the CelebA-HQ 1K test set and the ImageNet 1K test set are presented in Table~\ref{table:results_celeba} and Table~\ref{table:results_imagenet}, respectively. 
We present qualitative results for  
motion deblurring, Gaussian deblurring and super-resolution in Figures~\ref{fig:motion_deblur_0.05}, \ref{fig:g_deb_0.05_imagenet} and \ref{fig:SRx4_0.05_celeba}, respectively.  
More results appear in the appendix.

Examining the results, 
observe that, indeed, the sampling-based version of our approach (DDPG) consistently leads to better perceptual quality, both visually and as measured by LPIPS.
Interestingly, while there is an inherent tradeoff between accuracy and perceptual quality \cite{blau2018perception}, the PSNR reduction in DDPG compared to IDPG is rather moderate (and in some case DDPG even has slightly better PSNR than IDPG).
One can observe that our approach typically outperforms other approaches in deblurring and is competitive in super-resolution.

Importantly, note that %
the only reference methods that are as flexible as our approach to the observation model \tomtr{are DPS, which is very slow, and DiffPIR}.
However, their PSNR is significantly lower than the PSNR of our DDPG.
Recall that in many critical applications (e.g., medical imaging), accuracy, as measured by the classical PSNR metric, is extremely important.

\begin{figure}
    \center
    \begin{subfigure}[h]{0.25\columnwidth}
        \centering
        \includegraphics[width=\textwidth]{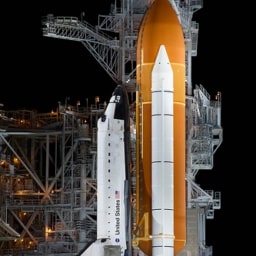}
    \end{subfigure} %
    \begin{subfigure}[h]{0.25\columnwidth}
        \centering
        \includegraphics[width=\textwidth]{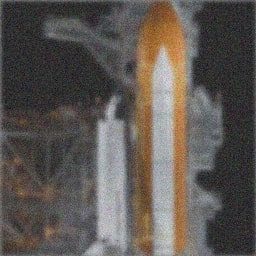}
    \end{subfigure} %
    \begin{subfigure}[h]{0.25\columnwidth}
        \centering
        \includegraphics[width=\textwidth]{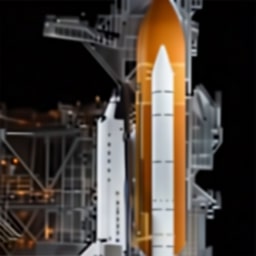}
    \end{subfigure} \hspace{-1mm} \\ 
    
    \begin{subfigure}[h]{0.25\columnwidth}
        \centering
        \includegraphics[width=\textwidth]{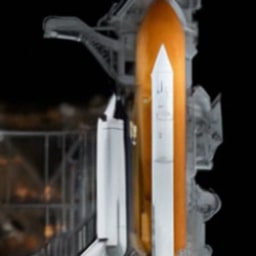}
    \end{subfigure}  %
    \begin{subfigure}[h]{0.25\columnwidth}
        \centering
        \includegraphics[width=\textwidth]{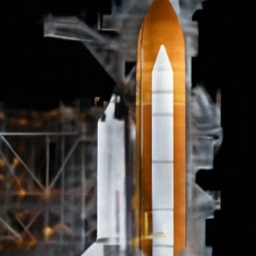}
    \end{subfigure}  %
    \begin{subfigure}[h]{0.25\columnwidth}
        \centering
        \includegraphics[width=\textwidth]{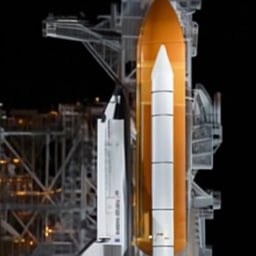}
    \end{subfigure}
    \vspace{-2mm}
    \caption{
    Gaussian deblurring with noise level 0.05. Top row, from left to right: original, observation, and our IDPG (baseline for DDPG). 
    Bottom row, from left to right: DPS, DiffPIR, and our DDPG.} %
    \label{fig:g_deb_0.05_imagenet}
    \vspace{-1mm}
    \center
    \begin{subfigure}[h]{0.25\columnwidth}
        \centering
        \includegraphics[width=\textwidth]{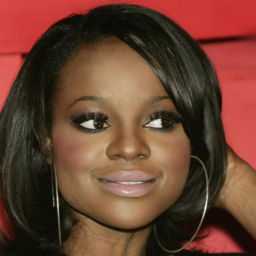}
    \end{subfigure} %
    \begin{subfigure}[h]{0.25\columnwidth}
        \centering
        \includegraphics[width=\textwidth]{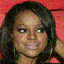}
    \end{subfigure} %
    \begin{subfigure}[h]{0.25\columnwidth}
        \centering
        \includegraphics[width=\textwidth]{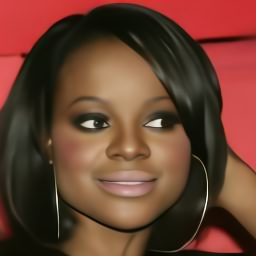}
    \end{subfigure} \hspace{-1mm} \\
    
    \begin{subfigure}[h]{0.25\columnwidth}
        \centering
        \includegraphics[width=\textwidth]{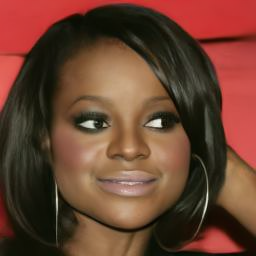}
    \end{subfigure}  %
    \begin{subfigure}[h]{0.25\columnwidth}
        \centering
        \includegraphics[width=\textwidth]{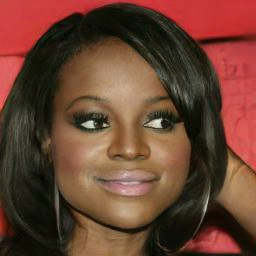}
    \end{subfigure}  %
    \begin{subfigure}[h]{0.25\columnwidth}
        \centering
        \includegraphics[width=\textwidth]{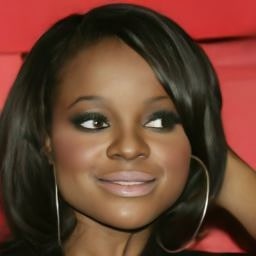}
    \end{subfigure}
    \vspace{-2mm}
    \caption{
    SRx4 with noise level 0.05. Top row, from left to right: original, upsampled observation, and our IDPG (baseline for DDPG). 
    Bottom row, from left to right: DDRM, DiffPIR, and our DDPG.}
    \label{fig:SRx4_0.05_celeba}
    \vspace{-4mm}
\end{figure}

\section{Conclusion}
\label{sec:conclusion}

In this paper,
we presented a framework for solving linear inverse problems with DNN denoisers/diffusers, which uses a novel preconditioned data-fidelity guidance approach, based on traversing from back-projection steps to least squares steps, exploiting the advantages of each.
We used the new approach with a computationally convenient trajectory 
within a ``plug-and-play" optimization scheme %
and 
its DDM sampling-based counterpart, \tomtr{which we devised}. 
The performance advantages of the new technique were shown in various deblurring and super-resolution settings, 
with and without observation noise.
\tomtm{(See the appendix 
for computed tomography experiments as well).}
As a direction for future research, one can try to learn the preconditioners $\{\W_t\}$ instead of designing them.

\vspace{1mm}
\noindent
\textbf{Acknowledgment.} 
The work is 
partially 
supported by ISF grant no.~1940/23.

\newpage

{
    \small
    \bibliographystyle{ieeenat_fullname}
    \bibliography{main_cvpr_final}
}

\newpage
\onecolumn
\appendix

\section{Proofs}
\label{sec:proofs}

\begin{claim}
    Let $\A \in \mathbb{A}^{m \times n}$ with $m \leq n$ and $\mathrm{rank}(\A)=m$.
    Let $\W \in \mathbb{R}^{m \times m}$ %
    such that $\mathrm{rank}(\A^T\W)=m$.
    Then, we have
    $$
    \A^T\W(\A\x-\y)=\0 \iff \A^T(\A\x-\y)=\0.
    $$
\end{claim}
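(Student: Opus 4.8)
The plan is to show that both sides of the stated equivalence are in fact equivalent to a single, simpler condition: that the residual $\r := \A\x - \y \in \mathbb{R}^m$ vanishes. Once this is established for each side separately, transitivity gives the claim immediately. The entire argument rests on the elementary fact that a matrix with full column rank has trivial kernel.

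First I would treat the right-hand side. Viewing $\A^T$ as a linear map from $\mathbb{R}^m$ to $\mathbb{R}^n$, it is an $n \times m$ matrix with $\mathrm{rank}(\A^T) = \mathrm{rank}(\A) = m$. Since $m \le n$, this is full column rank, so the kernel of $\A^T$ is trivial. Consequently $\A^T \r = \0 \iff \r = \0$. Next I would apply the identical reasoning to $\A^T\W$: by hypothesis it is an $n \times m$ matrix with $\mathrm{rank}(\A^T\W) = m$, hence again full column rank and therefore injective, which gives $\A^T\W \r = \0 \iff \r = \0$.

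Combining the two observations, both $\A^T\W(\A\x-\y)=\0$ and $\A^T(\A\x-\y)=\0$ are each equivalent to $\r = \0$ (equivalently, to the feasibility condition $\A\x = \y$), and hence equivalent to one another. This closes the proof.

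I do not expect a genuine obstacle, as the argument is purely a rank/kernel computation. The only point warranting a moment's care is making explicit that $\mathrm{rank}(\A^T) = m$ follows from $\mathrm{rank}(\A) = m$ together with $m \le n$, so that $\A^T$ truly has full column rank rather than merely some rank; this is precisely where the three hypotheses ($m \le n$, $\mathrm{rank}(\A)=m$, and $\mathrm{rank}(\A^T\W)=m$) are used. It is worth noting in passing that, because $\A^T$ is injective, the hypothesis $\mathrm{rank}(\A^T\W)=m$ is in fact equivalent to invertibility of $\W$, though the proof does not need this refinement.
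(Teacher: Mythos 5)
Your proof is correct and follows essentially the same route as the paper's: both show each equation is equivalent to the residual condition $\A\x=\y$ by injectivity of $\A^T$ and $\A^T\W$ (the paper realizes the injectivity by left-multiplying with the pseudoinverses $(\A\A^T)^{-1}\A$ and $(\A^T\W)^\dagger$, which is just a concrete witness of your trivial-kernel argument). Your parenthetical remark that $\mathrm{rank}(\A^T\W)=m$ is equivalent to invertibility of $\W$ is also correct, since the injective map $\A^T$ preserves rank.
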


\begin{proof}
    Since $\mathrm{rank}(\A)=m$ we have that $\A^T(\A\x-\y)=\0 \iff \A\x-\y=\0$ (e.g., multiply both sides of $\A^T(\A\x-\y)=\0$ from left by $\A^{T\dagger}=(\A\A^T)^{-1}\A$). 
    Similarly, since $\mathrm{rank}(\A^T\W)=m$ we have that $\A^T\W(\A\x-\y)=\0 \iff \A\x-\y=\0$ (e.g., multiply both sides from left by $(\A^T\W)^\dagger$). Thus we get the required result. 
\end{proof}

\begin{claim}
    Let $\A \in \mathbb{A}^{m \times n}$ with $m \leq n$. 
    Let $\W \in \mathbb{R}^{m \times m}$ be a positive definite matrix that shares eigenbasis with $\A\A^T$. Then, there exists a positive definite $\P\in\mathbb{R}^{n \times n}$ such that 
    $$
    \A^T\W\A = \P^{1/2}\A^T\A\P^{1/2}.
    $$
\end{claim}

\begin{proof}
    Let $\A=\U\bLambda\V^T$ be the SVD of $\A$, where $\bLambda\in\mathbb{R}^{m \times n}$ is rectangular diagonal, and $\U\in\mathbb{R}^{m \times m}$ and $\V\in\mathbb{R}^{n \times n}$ are orthogonal matrices. By the assumptions on $\W$ we have $\W=\U\bGamma\U^T$, where $\bGamma$ is diagonal and invertible. Thus, we have
    $$
    \A^T\W\A = \V\bLambda^T\bGamma\bLambda\V^T.
    $$
    Pick $\P = \V\tilde{\bGamma}\V^T$ where $\tilde{\bGamma} \in \mathbb{R}^{n \times n}$ is a diagonal matrix with the first $m$ entries on its diagonal that are the same as $\bGamma$ and 1's (or any other positive values) in the lower $n-m$ entries. We have
    $$
    \P^{1/2}\A^T\A\P^{1/2} = \V \tilde{\bGamma}^{1/2} \bLambda^T \bLambda \tilde{\bGamma}^{1/2} \V^T = \V\bLambda^T\bGamma\bLambda\V^T,
    $$
    which concludes the proof.
\end{proof}

\begin{claim}
    Denote by $\lambda_1$ the largest singular value of $\A$. Let $c \leq 1/\lambda_1^2$. Then, the update \eqref{eq:simple_rule} with $\mu_t=1$ ensures reduction in \eqref{eq:irls_term}.
\end{claim}

\begin{proof}

We begin by showing that under such choice of $c$, we have that $\g_{\delta_t}(\cdot)=\nabla_{\x} \ell_{WLS,t}(\cdot;\y)$ is $1$-Lipschitz. We prove it by upper bounding the operator norm of the Hessian $\nabla_{\x}^2\ell_{WLS,t}(\cdot;\y)$ by 1:
\begin{align}
    \|\nabla_{\x}^2\ell_{\W_t}\| &= \|\A^T\W_t\A\|  \\ \nonumber
    &\leq (1-\delta_t)\| \A^T(\A\A^T+\eta\I_m)^{-1}\A \| + \delta_t c \| \A^T\A \| \\ \nonumber
    &\leq (1-\delta_t) + \delta_t = 1.
\end{align}
where in the first inequality follows from the triangle inequality and the second inequality follows from $\| \A^T(\A\A^T+\eta\I_m)^{-1}\A \| \leq 1$ and $\|\A^T\A\|=1/\lambda_1^2$.

The claim is a consequence of the descent lemma for the gradient step $\tilde{\x}=\x-\mu_t \nabla_{\x} \ell_{WLS,t}(\x;\y)$ when the step-size equals 1 over the Lipschitz constant of $\g_{\delta_t}=\nabla_{\x} \ell_{WLS,t}$, which is 1 in our case.

For completeness, we present this well-known result here.
To simplify notation we denote $\ell_{WLS,t}$ by $\ell$ and omit dependency on $\y$. 
The 1-Lipschitzness of the gradient implies that
$\| \nabla_{\x} \ell(\x_2) - \nabla \ell(\x_1) \|_2 \leq \| \x_2 - \x_1 \|_2$ for all $\x_2, \x_1$. Equivalently,  %
this implies that for all $\x_2, \x_1$ we have 
\begin{align}
\label{Eq_smoothness_cond}
\ell(\x_2) - \ell(\x_1) \leq \nabla \ell(\x_1)^T (\x_2 - \x_1) + \frac{1}{2} \| \x_2 - \x_1 \|_2^2.
\end{align}
Recall that $\tilde{\x}=\x-\mu_t \nabla \ell(\x)$, so using 
$\x_1={\x}$ and $\x_2=\tilde{\x}$ in \eqref{Eq_smoothness_cond}, we get
\begin{align}
\label{Eq_gd_iter2}
\ell(\tilde{\x}) - \ell({\x}) \leq -\mu_t \| \nabla \ell({\x}) \|_2^2 + \mu_t^2 \frac{1}{2} \| \nabla \ell({\x}) \|_2^2.
\end{align}
Finally, substituting $\mu_t=1$ gives $\ell(\tilde{\x}) - \ell({\x}) \leq -\frac{1}{2}\| \nabla \ell({\x}) \|_2^2 \implies \ell(\tilde{\x}) < \ell({\x})$ whenever $\nabla \ell({\x}) \neq \0$.

\end{proof}

\begin{theorem}
\label{thm:bias_var_app}
    Consider the observation model \eqref{eq:obsevation_model} and estimating $\x^*$ via minimization of \eqref{eq:traditional_cost} with $s(\x)=\frac{\beta}{2}\|\D\x\|_2^2$. 
    Assume that: (a) $\A^T\A$ and $\D^T\D \succ 0$ share eigenbasis; (b) the singulars value of $\A$ are in $(0,1]$, and not all equal (common case); (c) $\eta=0$ and $c=1$. %
    Then, $b_{\text{BP}}<b_{\text{WLS}}<b_{\text{LS}}$, and $v_{\text{LS}}<v_{\text{WLS}}<v_{\text{BP}}$.
\end{theorem}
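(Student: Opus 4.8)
The plan is to derive a closed-form expression for $\hat\x$ in each case, diagonalize everything in the shared eigenbasis so the problem decouples across coordinates, and then reduce both orderings to a one-dimensional monotonicity argument in a scalar weight $\gamma_i$.

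First I would observe that with $s(\x)=\frac{\beta}{2}\|\D\x\|_2^2$ and a weighted-least-squares fidelity term of weight $\W\succeq\0$, the objective \eqref{eq:traditional_cost} is strictly convex because $\D^T\D\succ\0$; setting its gradient to zero gives the unique minimizer
\[
\hat\x = (\A^T\W\A+\beta\D^T\D)^{-1}\A^T\W\y .
\]
Writing the SVD $\A=\U\bLambda\V^T$, assumption (a) yields $\D^T\D=\V\bUpsilon\V^T$ with $\bUpsilon=\diag(\upsilon_i)\succ\0$, and since each of the three weights ($\I_m$ for LS, $(\A\A^T)^{-1}$ for BP, and their convex combination for WLS) shares eigenbasis with $\A\A^T$, we have $\W=\U\bGamma\U^T$ with $\bGamma=\diag(\gamma_i)$. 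Substituting these together with $\y=\A\x^*+\e$ and $\tilde\e:=\U^T\e\sim\mathcal{N}(\0,\sigma_e^2\I_m)$, the estimator decouples over the orthonormal $\V$-coordinates $\hat z_i:=\v_i^T\hat\x$: for the $m$ row-range directions (note $\lambda_i\in(0,1]$ forces $\mathrm{rank}(\A)=m$),
\[
\hat z_i=\frac{\lambda_i^2\gamma_i}{\lambda_i^2\gamma_i+\beta\upsilon_i}\,z_i^*+\frac{\lambda_i\gamma_i}{\lambda_i^2\gamma_i+\beta\upsilon_i}\,\tilde e_i ,
\]
with $z_i^*:=\v_i^T\x^*$, while $\hat z_i=0$ for the remaining $n-m$ directions.

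Next, because $\V$ is orthogonal, the squared bias and the variance split as sums of per-coordinate contributions. The $n-m$ null-space directions contribute the identical, $\W$-independent term $\sum_{i>m}(z_i^*)^2$ to every $b^2$ and nothing to $v$, so they cancel in all comparisons and only the row-range terms matter. For these I would plug in the three weights, which give (using $\eta=0$, $c=1$ and $\lambda_i\in(0,1]$) $\gamma_i^{\mathrm{LS}}=1$, $\gamma_i^{\mathrm{BP}}=1/\lambda_i^2$, and $\gamma_i^{\mathrm{WLS}}=(1-\delta)/\lambda_i^2+\delta$. Since $1/\lambda_i^2\ge1$ and the WLS value is a convex combination of these two, we obtain the coordinatewise ordering $1=\gamma_i^{\mathrm{LS}}\le\gamma_i^{\mathrm{WLS}}\le\gamma_i^{\mathrm{BP}}=1/\lambda_i^2$, strict whenever $\lambda_i<1$.

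Finally, I would show that the per-coordinate bias factor $\big(\beta\upsilon_i/(\lambda_i^2\gamma_i+\beta\upsilon_i)\big)^2$ is strictly decreasing in $\gamma_i$, while the variance factor $\big(\lambda_i\gamma_i/(\lambda_i^2\gamma_i+\beta\upsilon_i)\big)^2$ is strictly increasing in $\gamma_i$ — the latter via the derivative $\frac{d}{d\gamma}\frac{\lambda\gamma}{\lambda^2\gamma+\beta\upsilon}=\frac{\lambda\beta\upsilon}{(\lambda^2\gamma+\beta\upsilon)^2}>0$. Composing these monotonicities with the ordering of the $\gamma_i$ gives the claimed chains term by term, and summing yields $b_{\mathrm{BP}}\le b_{\mathrm{WLS}}\le b_{\mathrm{LS}}$ and $v_{\mathrm{LS}}\le v_{\mathrm{WLS}}\le v_{\mathrm{BP}}$. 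The main obstacle is establishing strictness cleanly: the inequalities are strict only at coordinates with $\lambda_i<1$, and such a coordinate must exist because the singular values lie in $(0,1]$ and are not all equal (hence not all equal to $1$). For the variance this already forces strictness once $\sigma_e>0$; for the bias one additionally needs $\x^*$ to have a nonzero component along some such $\v_i$ (the generic case), a point I would flag explicitly. A secondary technicality is handling the rectangular SVD so that the $n-m$ null-space coordinates are separated correctly from the invertible row-range block.
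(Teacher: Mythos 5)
Your proposal is correct and follows essentially the same route as the paper's proof: the closed-form minimizer $(\A^T\W\A+\beta\D^T\D)^{-1}\A^T\W\y$, diagonalization in the shared SVD eigenbasis, per-coordinate bias/variance expressions with the null-space terms canceling, and a scalar monotonicity argument (your monotonicity in the weight $\gamma_i$ at fixed $\lambda_i$ is just a change of variables away from the paper's monotonicity of $\left(\frac{a}{x+a}\right)^2$ and $\frac{x^2}{(x+a)^2}$ in $x=\lambda_i^2 s_i\in(0,1]$). If anything, you are more careful than the paper on strictness, which its proof leaves implicit: the strict bias ordering does require $\x^*$ to have a nonzero component along some singular direction with $\lambda_i<1$, and the strict variance ordering requires $\sigma_e>0$.
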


\begin{proof}

Let us define the singular value decomposition (SVD) of the $m \times n$ matrix $\A=\U \bLambda \V^T$, where $\U$ is an $m \times m$ orthogonal matrix whose columns are the left singular vectors, $\bLambda$ is an $m \times n$ rectangular diagonal matrix with nonzero singular values $\{\lambda_i \}_{i=1}^{m}$ on the diagonal, and $\V$ is an $n \times n$ orthogonal matrix whose columns are the right singular vectors. 
The assumptions on $\D$ imply that $\D^T\D=\V \bGamma^2 \V^T \succ 0$, where $\bGamma^2$ is an $n \times n$ diagonal matrix of nonzero eigenvalues $\{\gamma_i^2 \}_{i=1}^{n}$.

Recall that we consider the cost function
$$
f_{WLS}(\x) = \frac{1}{2}\|\W^{1/2}(\A\x-\y)\|_2^2 + \frac{\beta}{2} \|\D\x\|_2^2.
$$
Due to the (strong) convexity of the cost function, the (unique) minimizer can be obtained simply by equating their gradients to zero
\begin{align}
\label{eq:cost_typical_est}
\nabla f_{WLS}(\tilde{\x}) &= \A^T \W ( \A {\x} - \y ) + \beta \D^T\D {\x} = \0  \nonumber \\
& \Rightarrow \hat{\x}_{WLS}=(\A^T\W\A + \beta \D^T\D)^{-1}\A^T \W \y.
\end{align}
Note that $\hat{\x}_{LS}$ and $\hat{\x}_{BP}$ are instances of this formula with $\W=\I_m$ and $\W=(\A\A^T)^{-1}$, respectively.
For the WLS under consideration we have $\W = (1-\delta)(\A\A^T)^{-1} + \delta\I_m$.
In all these cases we have $\W=\U\S\U^T$ where $\S$ is an $m \times m$ diagonal matrix of positive values $\{ s_i \}_{i=1}^m$ (eigenvalues of $\W$).

From the conditions of the noise we have that $\mathbb{E}[\e]=\0$ and $\mathbb{E}[\e\e^T]=\sigma_e^2\I_m$.
Thus, similarly to the analysis in \citep{tirer2019back}, the MSE (conditioned on $\x^*$) can be expressed as
\begin{align}
\label{Eq_cost_typical_mse}
\Exp_\e\|\hat{\x}_{WLS} - \x^* \|_2^2 
& = \Exp_\e \left \|(\A^T\W\A + \beta \D^T\D)^{-1}\A^T\W (\A\x^*+\e) - \x^* \right \|_2^2  \nonumber \\
& = \tomt{ \left \|  (\A^T\W\A + \beta \D^T\D)^{-1}\A^T\W\A \x^* - \x^* \right \|_2^2 } + \Exp_\e \left [ \e^T \W\A (\A^T\W\A + \beta \D^T\D)^{-2} \A^T \W \e \right ]   \nonumber\\
&= \tomt{ \left \|\left ( (\A^T\W\A + \beta \D^T\D)^{-1}\A^T\W\A - \I_n \right )\x^* \right \|_2^2}  + \sigma_e^2 \mathrm{Tr} \left ( (\A^T\W\A + \beta \D^T\D)^{-2} \A^T \W^2 \A \right )  \nonumber\\
& = \left \| \V \left ( (\bLambda^T\S\bLambda + \beta \bGamma^2)^{-1}\bLambda^T\S\bLambda - \I_n \right ) \V^T \x^* \right \|_2^2 + \sigma_e^2 \mathrm{Tr} \left ( \V (\bLambda^T\S\bLambda + \beta \bGamma^2)^{-2} \bLambda^T\S^2\bLambda \V^T \right ) \nonumber \\
&= \sum \limits_{i=1}^{n} \Big ( \frac{\lambda_i^2 s_i}{\lambda_i^2 s_i + \beta \gamma_i^2} - 1 \Big )^2 [\V^T\x]_i^2 + \sigma_e^2 \sum \limits_{i=1}^{n} \frac{\lambda_i^2 s_i^2}{(\lambda_i^2 s_i + \beta \gamma_i^2)^2}
\end{align}
where $s_i$ and $\lambda_i$ with $i>m$ are just used for notation convenience and are in fact zeros.

The first term in \eqref{Eq_cost_typical_mse} is the squared bias and the second term is the variance. 
These expressions can be specialized to each data-fidelity term by substituting the relevant $\S$.
Specifically, we have that the bias terms of the estimators are given by:
\begin{align}
\label{Eq_cost_all_bias}
bias_{LS}^2 &=  \sum \limits_{i=1}^{m} %
\Big ( \frac{\beta \gamma_i^2}{\lambda_i^2+\beta \gamma_i^2} \Big )^2 [\V^T\x^*]_i^2 %
+  \sum \limits_{i=m+1}^{n} [\V^T\x^*]_i^2 , \nonumber \\
bias_{BP}^2 &=  \sum \limits_{i=1}^{m} %
\Big ( \frac{\beta \gamma_i^2}{1+\beta \gamma_i^2} \Big )^2 [\V^T\x^*]_i^2 %
+  \sum \limits_{i=m+1}^{n} [\V^T\x^*]_i^2 , \nonumber \\
bias_{WLS}^2 &= 
\sum \limits_{i=1}^{m} %
\Big ( \frac{\beta \gamma_i^2}{(1-\delta) + \delta\lambda_i^2 + \beta \gamma_i^2} \Big )^2 [\V^T\x^*]_i^2 
+  \sum \limits_{i=m+1}^{n} [\V^T\x^*]_i^2,
\end{align}
where we used the fact that for $\W = (1-\delta)(\A\A^T)^{-1} + \delta\I_m$ we have that $s_i = (1-\delta)/\lambda_i^2+\delta$.

By the theorem's assumption $\lambda_i \in (0,1]$ and not all are equal.
Thus, 
we have that $\lambda_i^2 \leq (1-\delta)+\delta\lambda_i^2 \leq 1$
with strict inequalities at some $i$. 
Therefore, to prove $bias_{BP}^2 < bias_{WLS}^2 < bias_{LS}^2$, it suffices to show that
the function $f(x)=\left( \frac{a}{x+a} \right)^2$ with $a>0$ is strictly monotonic decreasing on $[0,1]$, and this trivially holds.

Let us now consider the variances:
\begin{align}
\label{Eq_cost_all_var}
var_{LS} &= \sigma_e^2 \sum \limits_{i=1}^{m} %
\lambda_i^{-2} \frac{\lambda_i^4}{\left ( \lambda_i^2+\beta \gamma_i^2 \right )^2} ,
\nonumber \\
var_{BP} &= \sigma_e^2 \sum \limits_{i=1}^{m} %
\lambda_i^{-2} \frac{ 1}{(1+\beta \gamma_i^2)^2} 
, \nonumber \\
var_{WLS} &= \sigma_e^2 \sum \limits_{i=1}^{m} %
\frac{ \lambda_i^2((1-\delta)/\lambda_i^2 + \delta)^2}{\left (\lambda_i^2((1-\delta)/\lambda_i^2 + \delta) + \beta \gamma_i^2 \right )^2} 
= \sigma_e^2 \sum \limits_{i=1}^{m} \lambda_i^{-2} \frac{ ((1-\delta) + \delta \lambda_i^2 )^2}{\left ((1-\delta) + \delta \lambda_i^2 + \beta \gamma_i^2 \right )^2}
\end{align}

Similarly to the way the bias terms where compared, since $\lambda_i^2 \leq (1-\delta)+\delta\lambda_i^2 \leq 1$
with strict inequalities at some, to prove $var_{BP}^2 > var_{WLS}^2 > var_{LS}^2$, it suffices to show that
the function $f(x)= \frac{x^2}{(x+a)^2} = \frac{1}{(1+a/x)^2}$ with $a>0$ is strictly monotonic increasing on $(0,1]$, and this trivially holds.

\end{proof}

\begin{claim}
    Assume that $\mathrm{rank}(\A)=m$, the singular values of $\A$ are not all equal, $\eta=0$, and denote by $\V\in\mathbb{R}^{n \times m}$ an orthonormal basis for the row-range of $\A$. %
    We have that
    $$
    \kappa(\V^T \nabla_\x^2\ell_{BP} \V) <     \kappa(\V^T \nabla_\x^2\ell_{WLS} \V) <     \kappa(\V^T \nabla_\x^2\ell_{LS} \V).
    $$
\end{claim}

\begin{proof}

We can write the {\em compact} SVD of $\A$ as
$\A=\U\bLambda\V^T$, where $\bLambda\in\mathbb{R}^{m \times m}$ is  a diagonal matrix with nonzero
singular values $\{\lambda_i\}_{i=1}^m$ (indexed in decreasing order), $\U\in\mathbb{R}^{m \times m}$ is an orthogonal matrix and $\V\in\mathbb{R}^{n \times m}$ is the stated partial orthogonal matrix.
Note that $\nabla_\x^2\ell_{BP}=\A^T(\A\A^T)^{-1}\A$, $\nabla_\x^2\ell_{LS}=\A^T\A$, and $\nabla_\x^2\ell_{WLS}=(1-\delta)\A^T(\A\A^T)^{-1}\A + \delta c \A^T\A$.
Thus, $\kappa(\V^T\nabla_\x^2\ell_{BP}\V)=\kappa(\I_m)=1$.
$\kappa(\V^T\nabla_\x^2\ell_{LS}\V)=\kappa(\bLambda^2)=\frac{\lambda_1^2}{\lambda_m^2}$.
$\kappa(\V^T\nabla_\x^2\ell_{WLS}\V)=\kappa((1-\delta)\I_m+\delta c \bLambda^2)=\frac{(1-\delta)+\delta c \lambda_1^2}{(1-\delta)+\delta c \lambda_m^2}$.
Clearly, $\kappa(\V^T\nabla_\x^2\ell_{WLS}\V) > \kappa(\V^T\nabla_\x^2\ell_{BP}\V)$.
And $\kappa(\V^T\nabla_\x^2\ell_{WLS}\V) < \kappa(\V^T\nabla_\x^2\ell_{LS}\V)$ follows from
$$
\frac{(1-\delta)+\delta c \lambda_1^2}{(1-\delta)+\delta c \lambda_m^2} < \frac{\lambda_1^2}{\lambda_m^2}
\iff
\lambda_m^2\left ( (1-\delta)+\delta c \lambda_1^2 \right ) < \lambda_1^2 \left ( (1-\delta)+\delta c \lambda_m^2 \right ) \iff \lambda_m^2 < \lambda_1^2.
$$
    
\end{proof}

\newpage
\section{Fast Pseudoinverse Implementations}
\label{sec:pseudoinverse}

In this section, we show that the pseudoinverse operation $\A^\dagger:\mathbb{R}^m \to \mathbb{R}^n$ can be implemented very efficiently for the cases of image deblurring and image super-resolution (no need to compute and store the SVD of $\A$). We note that there are other cases where this operation can be easily implemented, such as image inpainting, computed tomography, and more.
In image inpainting we simply have that $\A^\dagger=\A^T$. In fact, this is the case whenever $\A$ is a {\em tight-frame} (i.e., when $\A\A^T=\I_m$). In this case, the BP and LS update steps are essentially equivalent, and therefore do not require the special treatment that is considered in the paper. In computed tomography, the pseudoinverse can be implemented via fast 
\tomtr{(filtered) inverse}
Radon transform, whose details are out of the scope of this paper.
Moreover, as mentioned in the paper, for general $\A$ one can implement the operation  $\A^\dagger=\A^T(\A\A^T)^\dagger$ with low computational complexity by the conjugate gradients methods, where full rank $\A\A^T$ (and $\A\A^T+\eta\I_m$ otherwise) can be ``inverted" using few conjugate gradient iterations, which only require applying the operations $\A$ and $\A^T$ and bypass the need of matrix inversion or SVD.

\subsection{Image Deblurring}

In image deblurring the measurement operator $\A \in \mathbb{R}^{n \times n}$ (note that $m=n$) is a convolution with some blur kernel $\mathbf{k}$, i.e., $\A\x = \x \circledast \mathbf{k}$. 
Under the assumption of circular convolution (which merely affects boundary pixels and can be addressed by padding), we have that $\A$ is a circulant matrix, and thus can be diagonalized by the discrete Fourier transform.
Therefore, this convolution operation can be computed as element-wise multiplication in the discrete Fourier domain, which is efficiently implemented via Fast Fourier Transform (FFT). 
Specifically, for $\z\in\mathbb{R}^n$ we have that $\A\z = \mathcal{F}^{-1}\left (\mathcal{F}(\mathbf{k})\mathcal{F}(\z) \right)$, where $\mathcal{F}$ denotes the FFT. 
Similarly, $\A^T$, which is convolution with flipped $\mathbf{k}$, can be applied as $\A^T\z = \mathcal{F}^{-1}\left (\overline{\mathcal{F}(\mathbf{k})}\mathcal{F}(\z) \right)$.
Lastly, the operation $\A^T(\A\A^T+\eta\I_n)^{-1}\z$ can be efficiently computed as
\begin{align}
    \A^T(\A\A^T+\eta\I_n)^{-1}\z = \mathcal{F}^{-1} \left ( \frac{ \overline{\mathcal{F}(\mathbf{k})}\mathcal{F}(\z)}{|\mathcal{F}(\mathbf{k})|^2 + \eta} \right ).
\end{align}
As done throughout the paper, we use notation of 1D signal vector for simplification, but the extension to 2D signals, 2D convolutions, and 2D FFT, is straightforward.

\subsection{Image Super-Resolution}

In image super-resolution the measurement operator $\A \in \mathbb{R}^{m \times n}$ (note that $m=n$) is a composition of convolution with some blur kernel $\mathbf{k}$ and subsampling by some scale factor $s$, i.e., $\A\x = [\x \circledast \mathbf{k}]\downarrow_s$. 

Under the assumption of circular convolution (which merely affects boundary pixels and can be addressed by padding) and integer $s=n/m$, we have $\A=\S\B$, where $\B\in\mathbb{R}^{n \times n}$ is a circulant matrix and $\S \in \mathbb{R}^{m \times n}$.
Therefore, the operation $\A=\S\B$ can be implemented by FFT-based filtering followed by subsampling and the operation $\A^T=\B^T\S^T$ can be implemented by upsampling followed by FFT-based filtering.
Moreover, $\A\A^T=\S\B\B^T\S^T$ is circulant and essentially performs filtering with the kernel $\mathbf{k}_0 = \left [ \mathcal{F}^{-1}\left (|\mathcal{F}(\mathbf{k})|^2 \right) \right ] \downarrow_s$.
Lastly, the operation $\A^T(\A\A^T+\eta\I_m)^{-1}\z$ can be efficiently computed as
\begin{align}
    \A^T(\A\A^T+\eta\I_m)^{-1}\z = \mathcal{F}^{-1} \left ( \overline{\mathcal{F}(\mathbf{k})} \mathcal{F} \left ( \left [ \mathcal{F}^{-1} \left ( \frac{ \mathcal{F}(\z)}{|\mathcal{F}(\mathbf{k}_0)|^2 + \eta} \right ) \right ] \big \uparrow_s \right ) \right ).
\end{align}
Again, extension from 1D to 2D is straightforward.

\newpage

\section{More Experimental Details and Results}

In this section we present more details on the experiments, and more quantitative and qualitative results, which have not been stated in the main body of the paper due to space limitation. %
Our code is available at
\url{https://github.com/tirer-lab/DDPG}.

\begin{table}[H]
\scriptsize %
\renewcommand{\arraystretch}{1.3}
\caption{Super-resolution and deblurring PSNR [dB] ($\uparrow$) and LPIPS ($\downarrow$) results on CelebA-HQ 1K. N/A marks applicability limitation of: (1) DDNM to noiseless settings and (2) DDRM to settings where the SVD is given and stored. (More details in the text).
Note that SwinIR and Restormer are task-specific methods, and are thus not flexible to handle most of the examined tasks. 
} \label{table:results_celeba_supp}
\centering
    \begin{tabular}{ | c || c | c | c | c | c | c | c | c | c |}
    \hline
 \diagbox[height=2em,width=10em]{Task}{Method}  & SwinIR (SR) & Restormer (Deb.) %
 & DDRM & DPS (1000 NFEs) 
 & DiffPIR & DDNM & 
 IDPG (ours) & DDPG (ours) \\ \hline \hline

    Bicub.~SRx4~$\sigma_e$=0 
    & {\bf 33.26} / 0.100 & ---
    & 31.64 / 0.054  
    & 29.39 / 0.065
    & 30.26 / 0.051 & 31.64 / {\bf 0.048} & 
    32.66 / 0.111  & 31.60 / 0.052 \\ \hline

    Bicub.~SRx4~$\sigma_e$=0.05 
    & 27.30 / 0.213 & ---
    & 29.26 / 0.090 
    & 27.49 / 0.086
    & 27.44 / {\bf 0.085} & N/A & 
    {\bf 29.89} / 0.155 & 29.39 / 0.105  \\ \hline

    Gauss.~Deb.~$\sigma_e$=0 
    & --- & 29.32 / 0.100
    & 42.49 / 0.006 
    & 31.25 / 0.055
    & 32.97 / 0.041 & 45.56 / {\bf 0.002} & 
    {\bf 45.58} / {\bf 0.002} & 45.46 / {\bf 0.002}  \\ \hline
    
    Gauss.~Deb.~$\sigma_e$=0.05 
    & --- & 25.28 / 0.431
    & 30.53 / 0.074 
    & 27.75 / 0.084
    & 28.89 / 0.074 & N/A & 
     {\bf 31.08} / 0.150 & 30.41 / {\bf 0.068}  \\ \hline
    
    Gauss.~Deb.~$\sigma_e$=0.1 
    & --- & 21.67 / 0.652
    & 28.79 / 0.088 
    & 26.67 / 0.097
    & 27.59 / 0.083 & N/A & 
    {\bf 29.28} / 0.146 & 29.18 / {\bf 0.080}  \\ \hline

    Motion Deb.~$\sigma_e$=0.05 
    & --- & 19.03 / 0.530
    &  N/A 
    & 19.63 / 0.227
    & 27.96 / 0.102 & N/A & 
    {\bf 29.73} / 0.134 & 29.02 / {\bf 0.082}  \\ \hline
    
    Motion Deb.~$\sigma_e$=0.1 
    & --- & 16.32 / 0.813
    &  N/A 
    & 19.64 / 0.231
    & 26.23 / 0.132 & N/A & 
    {\bf 27.86} / 0.166 & 27.74 / {\bf 0.099}  \\ \hline    
    
    \end{tabular}
\end{table}

\begin{wrapfigure}{r}{0.25\textwidth}
\vspace{-30pt}
    \centering
  \includegraphics[width=100pt]{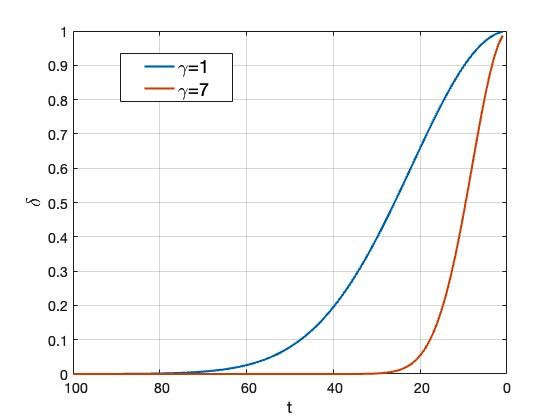}  
  \vspace{-3mm}
  \caption{\footnotesize $\delta_t$ for $\gamma=\{1,7\}$.}
\label{fig:delta_ddpg}
\vspace{-10pt}
\end{wrapfigure}

\subsection{Hyperparameter setting}

As mentioned in Section~\ref{sec:hyperparams}, in our experiments we do not modify the denoising diffusion model (DDM) hyperparameters $\{ \beta_t \}$ compared to other methods. Specifically, we have that this set is composed of linear scheduling from  $\beta_{start}=0.0001$ to $\beta_{end}=0.02$. %
The parameters $\{ \bar{\alpha}_t \}$ are determined by $\{ \beta_t \}$. 
As explained in the paper, we use $\{ \bar{\alpha}_t \}$ of size $T=100$ to set $\{ \delta_t \}$ via $\delta_t = \bar{\alpha}_t^\gamma$, where $\gamma \geq 0$ is a single hyperparameter that we tune. 
Figure~\ref{fig:delta_ddpg} shows the resulting $\{ \delta_t \}$ for two values of $\gamma$.
Note that if $\sigma_e=0$ we simply set $\delta_t=0$, so we do not need to tune $\gamma$.

Another hyperparameter is $\eta$, which regularizes the inversion in the operation $\A^T(\A\A^T+\eta\I_m)^{-1}$.
We scale it according to the noise level and define: $\eta=\mathrm{max}(1\mathrm{e}{-4}, (2\sigma_e)^2 \tilde{\eta})$, where $\tilde{\eta}$ is the hyperparameter that we tune. 
Note that if $\sigma_e=0$ we do not need to tune $\tilde{\eta}$. 
\tomtm{Setting $c=1$, it is left} 
to set the step-size $\{ \mu_t \}$ and, specifically for DDPG, also $\zeta \in [0,1]$.  
The step-size that is used is either $\mu_t=1$ or $\mu_t = (1-\bar{\alpha}_{t-1})/(1-\bar{\alpha}_{t}):=\mu_t^*$, which reduces from 1 significantly only close to the last iterations. 

As mentioned in Section~\ref{sec:exp}, the tasks that we consider are common in the literature.
For super-resolution, we consider bicubic downsampling with scale factor 4, as in \citep{kawar2022denoising,wang2022zero}. For deblurring, we consider Gaussian blur kernel with standard deviation 10 clipped to size $5 \times 5$, as in \citep{kawar2022denoising,wang2022zero}.
For deblurring, we also consider motion blur kernels generated using the same procedure (with intensity value 0.5) as in \cite{chung2022diffusion,zhu2023denoising}.  
For each observation model we consider different levels of Gaussian noise out of $\{0, 0.05, 0.1\}$.

Let us state the hyperparameters for Section~\ref{sec:exp_core} (examining the core approach).
IDBP is tuned with $\tilde{\eta}=\{32,6\}$ for deblurring and SR, respectively. 
For $\sigma_e=0$, IDPG reduces to IDBP ($\delta_t=0$) \textcolor{black}{and $\tilde{\eta}$ is irrelevant.} 
\textcolor{black}{Regarding CelebA-HQ with noisy observations,} 
\tomtf{for SR with $\sigma_e=0.05$ it is used with $\tilde{\eta}=0.2$ and $\gamma=16$, and for Gaussian deblurring it is used with $\tilde{\eta}=0.6$ and $\gamma=\{8,6\}$ for $\sigma_e=\{0.05,0.1\}$, respectively.}
Additionally, for motion deblurring in Section~\ref{sec:exp_comparison}, IDPG is tuned with $\gamma=\{12,14\}$ and $\tilde{\eta}=\{0.9, 1\}$ (in this case, larger $\tilde{\eta}$ for larger noise allows increasing $\gamma$).
\textcolor{black}{Regarding ImageNet with $\sigma_e=0.05$, we use $\gamma=\{30,11,14\}$ and $\tilde{\eta}=\{0.2, 0.6, 0.8\}$ for SR, Gaussian deblurring and motion deblurring, respectively.} %
\textcolor{black}{In all these cases, we use IDPG with $\mu_t=1$.}

Lastly, the hyperparameters of DDPG are listed in Table~\ref{table:ddpg_hyperparams}.
\textcolor{black}{Note that in many of the settings,  $\tilde{\eta}$ and $\zeta$ remain similar.}

\begin{table}[H]
\scriptsize %
\renewcommand{\arraystretch}{1.3}
\caption{DDPG hyperparameters.} 
\vspace{-2mm}
\label{table:ddpg_hyperparams}
\centering
    \begin{tabular}{ | c || c | c | c | c | c | c | c |}
    \hline
 Task  & CelebA-HQ & ImageNet \\ \hline \hline

    Bicub.~SRx4~$\sigma_e$=0 
    & $\zeta=0.7$, $\mu_t = 1$ 
    & $\zeta=0.7$, $\mu_t = 1$ \\ \hline

    Bicub.~SRx4~$\sigma_e$=0.05 
    &  $\gamma=10.0$, $\zeta=0.8$, $\tilde{\eta}=0.3$, $\mu_t=\mu_t^*$ 
    & $\gamma=6.0$, $\zeta=1.0$, $\tilde{\eta}=0.3$, $\mu_t=\mu_t^*$  \\ \hline

    Gauss.~Deb.~$\sigma_e$=0 
    & $\zeta=1.0$, $\mu_t = 1$   
    & $\zeta=1.0$, $\mu_t = 1$  \\ \hline
    
    Gauss.~Deb.~$\sigma_e$=0.05 
    & $\gamma=8.0$, $\zeta=0.5$, $\tilde{\eta}=0.7$  , $\mu_t = \mu_t^*$  
    & $\gamma=10.0$, $\zeta=0.4$, $\tilde{\eta}=0.7$  , $\mu_t = \mu_t^*$   \\ \hline
    
    Gauss.~Deb.~$\sigma_e$=0.1 
    & $\gamma=5.0$, $\zeta=0.6$, $\tilde{\eta}=0.7$ , $\mu_t = \mu_t^*$  
    & --- \\ \hline

    Motion Deb.~$\sigma_e$=0.05 
    &   $\gamma=5.0$, $\zeta=0.6$, $\tilde{\eta}=0.6$ , $\mu_t = \mu_t^*$
    & %
    
    $\gamma=6.0$, $\zeta=0.6$, $\tilde{\eta}=0.7$, $\mu_t = \mu_t^*$ \\ \hline
    
    Motion Deb.~$\sigma_e$=0.1 
    &   $\gamma=5.0$, $\zeta=0.6$, $\tilde{\eta}=0.6$, $\mu_t = \mu_t^*$
    &  $\gamma=3.0$, $\zeta=0.6$, $\tilde{\eta}=0.4$, $\mu_t = \mu_t^*$
    \\ \hline   
    
    \end{tabular}
\end{table}

\subsection{More quantitative comparisons for deblurring and super-resolution}

\tomtr{In this subsection, we report results of more competing methods for the same experimental settings that appear in the main body of the paper.}

We examine 
two representative deep learning methods that are based on per-task supervised learning: 
SwinIR \cite{liang2021swinir} for super-resolution and Restormer \cite{zamir2022restormer} for deblurring.
Note though that, as discussed in the paper, we observed that these methods do not generalize well to test sets that are not exactly aligned with their exhaustive training procedure.
Specifically, while SwinIR performs well (in terms of PSNR but not in terms of LPIPS) for the noiseless SRx4 with bicubic downsampling, for which it has been exactly trained, it exhibits massive performance drop in the presence of noise. 
Similarly, we could not managed to get good results with the Restormer, presumably because its training phase considered a specific deblurring dataset.
In fact, the behavior of these methods motivates using deep learning for learning the signal prior separately from the observation model, as we discussed in the introduction section.

The results for CelebA-HQ 1K test set are presented in Table~\ref{table:results_celeba_supp} (which is an extended version of Table~\ref{table:results_celeba}).
The discussion on the results, as made in the main body of the paper, still carries on.
Both our IDPG and DDPG are flexible to the observation model. IDPG presents good PSNR results and DDPG balances it with good LPIPS results (and better perceptual quality). In general, our DDPG demonstrates competitive LPIPS results and better PSNR results than the alternative DDM-based methods. The only reference methods that are as flexible to the observation model as DDPG are DiffPIR \cite{zhu2023denoising} and DPS \cite{chung2022diffusion}. However, DiffPIR yields significantly lower PSNR and DPS both yields lower PSNR and is also extremely slow.

\begin{wrapfigure}{r}{0.2\columnwidth}
\vspace{-12pt}
    \centering
  \includegraphics[width=50pt]{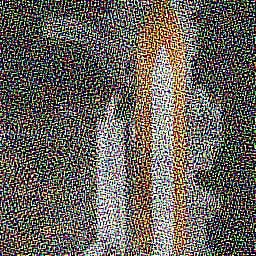}  
  \vspace{-3mm}
  \caption{\scriptsize %
  Failure of DDNM+ for Gaussian deblurring with noise level 0.05.} %
\label{fig:g_deb_0.05_imagenet_small}
\vspace{-10pt}
\end{wrapfigure}

\vspace{3mm}

\noindent
\tomtf{
\textbf{Applicability issues of DDNM+.}
As mentioned in Section~\ref{sec:exp}, DDNM+ that was proposed in \cite{wang2022zero} for handling noisy $\y$, {\em via SVD} (!), seems to be heavily tied to a specific downsampling task (without bicubic kernel) and does not support the considered tasks. %
Indeed, when running the official DDNM+ code for bicubic SR with noise we get ``not supported" assert, and when running it for deblurring Gaussian kernel with noise level 0.05 (as in Figure~\ref{fig:g_deb_0.05_imagenet}) %
it completely fails,
e.g., see 
Figure~\ref{fig:g_deb_0.05_imagenet_small}. Thus, DDNM+ cannot be applied to the examined settings (and all the efforts to fix it failed).} %

\vspace{3mm}

\noindent
\tomtf{
\textbf{Low-noise scenarios.} 
Note that the fact that our approach handles well both noiseless settings and settings with high noise levels implies that it can be readily used for settings with low noise levels.
In Table~\ref{table:results_celeba_tiny} we present the results for $\sigma_e=0.01$, which 
show the advantages of our approach also 
in low noise scenarios.
In all these cases we use $\mu_t=1$. For SR we use $\gamma=300, \zeta=1.0, \tilde{\eta}=1.0$.
For Gaussian deblurring we use $\gamma=11, \zeta=0.6, \tilde{\eta}=1.0$. For motion deblurring we use $\gamma=50, \zeta=0.5, \tilde{\eta}=6.0$.}

\begin{table}[H]
\vspace{-2mm}
\scriptsize %
\renewcommand{\arraystretch}{1.3}
\caption{%
PSNR and LPIPS for CelebA-HQ 1K with $\sigma_e=0.01$. DDRM is not applicable for motion deblur. DDNM(+) is not applicable.}
\label{table:results_celeba_tiny}
\vspace{-2mm}
\centering
    \begin{tabular}{ | c || c | c | c | c | c | c | c |}
    \hline
 \diagbox[height=2em,width=10em]{Task}{Method}  & DDRM & DPS %
 & DiffPIR & 
 IDPG (ours) & DDPG (ours) \\ \hline \hline

    Bicub.~SRx4~$\sigma_e$=0.01 & 31.09 / 0.066  
    & 29.11 / 0.068 
    & 29.62 / {\bf 0.058} & 
    {\bf 31.99} / 0.127 & 31.81 / 0.092  \\ \hline
    
    Gauss.~Deb.~$\sigma_e$=0.01 &  33.90 / 0.045 
    & 30.27 / 0.060 
    & 32.01 / 0.060 & 
    {\bf 34.26} / 0.071  & 32.20 / {\bf 0.044}  \\ \hline

    Motion Deb.~$\sigma_e$=0.01 &  N/A 
    & 19.52 / 0.228
    & 31.72 / 0.050 & 
    {\bf 33.29} / 0.079 &  32.55 / {\bf 0.045}  \\ \hline    

    \end{tabular}
    \vspace{-6mm}
\end{table}

\begin{figure}
    \centering
    \begin{subfigure}[h]{0.16\textwidth}
        \caption*{Ground truth}
        \centering
        \includegraphics[width=\textwidth]{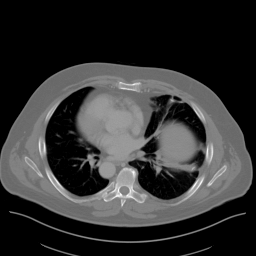}
    \end{subfigure}
    \begin{subfigure}[h]{0.16\textwidth}
        \caption*{FBP}
        \centering
        \includegraphics[width=\textwidth]{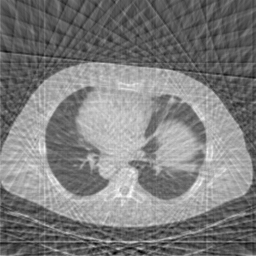}
    \end{subfigure}
    \begin{subfigure}[h]{0.16\textwidth}
        \caption*{MCG}
        \centering
        \includegraphics[width=\textwidth]{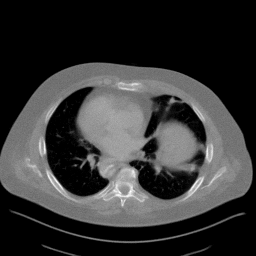}
    \end{subfigure}
    \begin{subfigure}[h]{0.16\textwidth}
        \caption*{{\bf DDPG}}
        \centering
        \includegraphics[width=\textwidth]{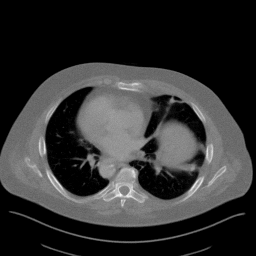}
    \end{subfigure}
    \\
    \begin{subfigure}[h]{0.16\textwidth}
        \caption*{Ground truth}
        \centering
        \includegraphics[width=\textwidth]{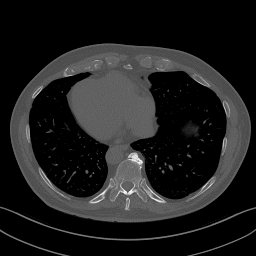}
    \end{subfigure}    
    \begin{subfigure}[h]{0.16\textwidth}
        \caption*{FBP}
        \centering
        \includegraphics[width=\textwidth]{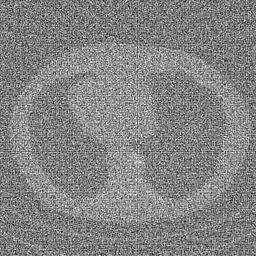}
    \end{subfigure}
    \begin{subfigure}[h]{0.16\textwidth}
        \caption*{MCG}
        \centering
        \includegraphics[width=\textwidth]{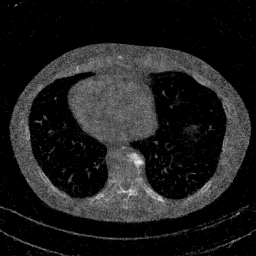}
    \end{subfigure}
    \begin{subfigure}[h]{0.16\textwidth}
        \caption*{{\bf DDPG}}
        \centering
        \includegraphics[width=\textwidth]{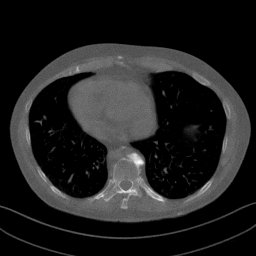}
    \end{subfigure}
    \caption{AAPM: Sparse-view CT (30 views). Top: $\sigma_e=0$; bottom: $\sigma_e=0.001\|\A\x^*\|_2$.}
    \label{fig:CT}
\end{figure}

\subsection{\tomtr{Sparse-view computed tomography}}

In this subsection, we report the performance of our DDPG for sparse-view computed tomography (SV-CT).
We compare our method against the recent MCG method \cite{chung2022improving}, which has an official implementation for such task, based on score-SDE model \cite{song2020score}, pre-trained on the 2016 American Association of Physicists in Medicine (AAPM) grand challenge dataset resized to $256 \times 256$ resolution. 
As done in \cite{chung2022improving}, the measurement operator $\A$
simulates the CT measurement process with
parallel beam geometry with evenly-spaced 180 degrees (essentially, implemented by applying Radon transform on $\x^*$). 
The test set consists of 100 
held-out validation images from the AAPM challenge.

To demonstrate the ease of integrating our approach in SDE-based sampling schemes (and not only in DDPM/DDIM schemes), we make minimal modifications to the MCG implementation, and essentially, merely replace their data-fidelity guidance with our $\g_{\delta_t}$.
Specifically, we keep using $T=2000$ iterations as in MCG (though, this number can be reduced) with the same set of  noise levels $\{ \tilde{\lambda}_t \} \in (0,1]$ that decreases along the iterations.
Conveniently, we set 
the step-size $\mu_t = \tilde{\lambda}_t$, and 
$\delta_t= \left ( \frac{1-\tilde{\lambda}_t}{1-\mathrm{min}\tilde{\lambda}} \right )^\gamma$. Thus, we can still tune only a scalar $\gamma$ to determine $\{\delta_t\}$ for our DDPG.
No $\zeta$ needs to be tune, as the estimated noise in not injected (equivalently $\zeta=1$).
Regarding the regularized back-projection operation (used in $\g_{BP}$), in the context of CT, it is typically being referred to as ``filtered back-projection" (FBP) and it is implemented by incorporating Ramp filter with the inverse Radon transform. 
The Ramp filter is  triangular in frequency domain with values between 0 and 1 that attenuates low frequencies and thus emphasizes details. 
We impose the regularization on this BP operation via the hyperparameter $\eta$ simply by upper bounding the filter in frequency domain by $1/\eta$ (so, e.g., $\eta=0$ implies no regularization).
As for the LS step (used in $\g_{LS}$), the largest eigenvalue of $\A$, denoted by $\lambda_1$ in the main body of the paper, is larger than 1 for CT, so we set $c=1/\lambda_1^2$ instead of $c=1$. 
To conclude, we have only two hyperparameters, $\gamma$ and $\eta$, that we manually tune for DDPG.

\begin{wraptable}{r}{5.5cm}
\scriptsize %
\renewcommand{\arraystretch}{1.3}
\caption{Sparse-view CT (30 views): PSNR [dB] ($\uparrow$) and SSIM ($\uparrow$) results on AAPM dataset.}
\label{table:results_CT_supp}
\vspace{-3mm}
\centering
    \begin{tabular}{ | c || c | c | c | c | c | c | c |}
    \hline
 \diagbox[height=2em,width=8em]{Task}{Method}  & MCG & DDPG (ours)  \\ \hline \hline

    CT, $\sigma_e=0$ &  34.98 / 0.905  & {\bf 36.01} / {\bf 0.913}  \\ \hline
    
    CT, $\sigma_e>0$  & 23.63 / 0.480 & {\bf 26.75} / {\bf 0.761}  \\ \hline    
    
    \end{tabular}
\end{wraptable}

We consider the SV-CT with 30 views (as in \cite{chung2022improving}). We examine the case where we do not add additional Gaussian noise $\e$ to $\A\x^*$. Yet, we observed that some ground truth images are already noisy and, presumably, this is detrimental for pure BP-based guidance. 
We also examine the case where the additional noise level is $0.001\|\A\x^*\|_2$. %
We use $\gamma=1, \eta=0$ and $\gamma=0.1, \eta=10$ for the two cases, respectively.
The quantitative results (PSNR and SSIM metrics) are presented in Table~\ref{table:results_CT_supp}.
They show that DDPG outperforms MCG.  
Qualitative results, which are presented in Figure~\ref{fig:CT}, visually demonstrate the superiority of DDPG over MCG in recovering finer details and robustness to noise.

\subsection{More qualitative results}

In what follows, we present more visual results for the different tasks.
In the noiseless cases many of the methods perform well, so we recommend the reader to focus on the results for the noisy settings, which are also the focus of the paper.

\begin{figure}[b]
    \centering
       \begin{subfigure}[h]{0.16\textwidth}
        \caption*{Ground truth}
        \centering
        \includegraphics[width=\textwidth]{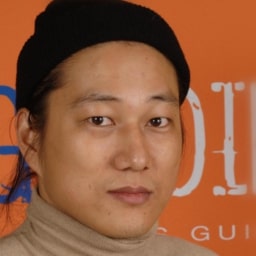}
    \end{subfigure}
    \begin{subfigure}[h]{0.16\textwidth}
        \caption*{Observed image}
        \centering
        \includegraphics[width=\textwidth]{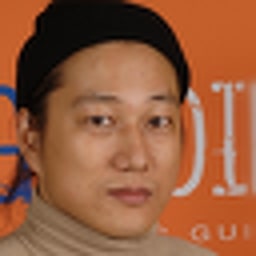}
    \end{subfigure}
    \begin{subfigure}[h]{0.16\textwidth}
        \caption*{SwinIR}
        \centering
        \includegraphics[width=\textwidth]{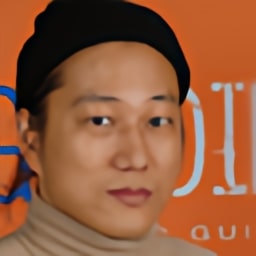}
    \end{subfigure}
    \begin{subfigure}[h]{0.16\textwidth}
        \caption*{DDRM}
        \centering
        \includegraphics[width=\textwidth]{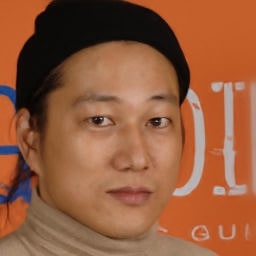}
    \end{subfigure}
    \begin{subfigure}[h]{0.16\textwidth}
        \caption*{DDNM}
        \centering
        \includegraphics[width=\textwidth]{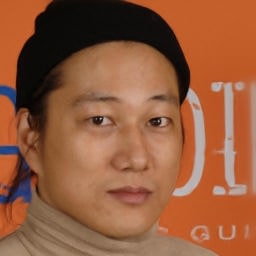}
    \end{subfigure}
    \\
    \begin{subfigure}[h]{0.16\textwidth}
        \caption*{DPS}
        \centering
        \includegraphics[width=\textwidth]{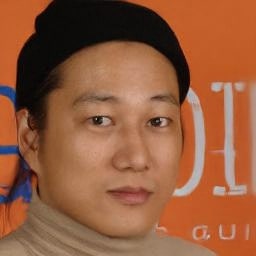}
    \end{subfigure}
    \begin{subfigure}[h]{0.16\textwidth}
        \caption*{DiffPIR}
        \centering
        \includegraphics[width=\textwidth]{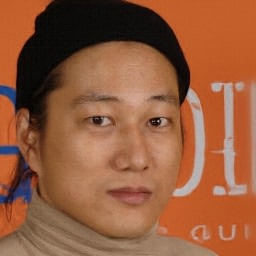}
    \end{subfigure}
    \begin{subfigure}[h]{0.16\textwidth}
        \caption*{{\bf IDPG}}
        \centering
        \includegraphics[width=\textwidth]{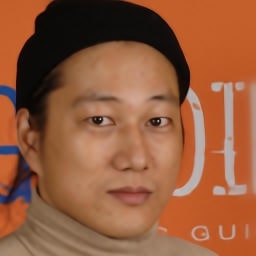}
    \end{subfigure}
    \begin{subfigure}[h]{0.16\textwidth}
        \caption*{{\bf DDPG}}
        \centering
        \includegraphics[width=\textwidth]{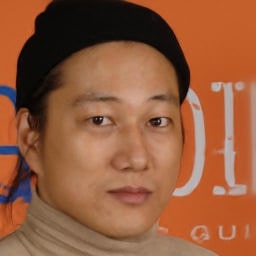}
    \end{subfigure}
    
    \caption{CelebA-HQ: SRx4 for noiseless bicubic downsampling.}
    \label{fig:SRx4_0_celeba}
\end{figure}

\begin{figure}
    \centering
        \begin{subfigure}[h]{0.16\textwidth}
        \caption*{Ground truth}
        \centering
        \includegraphics[width=\textwidth]{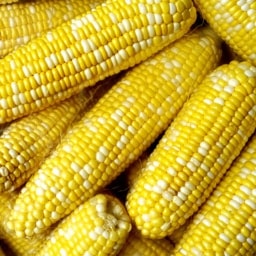}
    \end{subfigure}
    \begin{subfigure}[h]{0.16\textwidth}
        \caption*{Observed image}
        \centering
        \includegraphics[width=\textwidth]{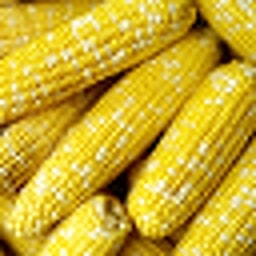}
    \end{subfigure}
    \begin{subfigure}[h]{0.16\textwidth}
        \caption*{SwinIR}
        \centering
        \includegraphics[width=\textwidth]{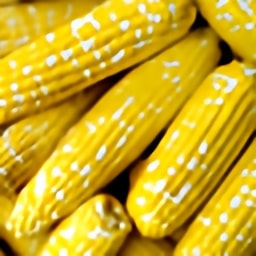}
    \end{subfigure}
    \begin{subfigure}[h]{0.16\textwidth}
        \caption*{DDRM}
        \centering
        \includegraphics[width=\textwidth]{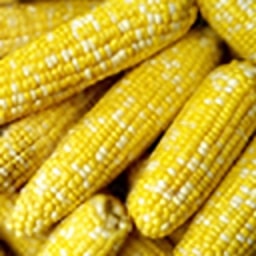}
    \end{subfigure}
    \begin{subfigure}[h]{0.16\textwidth}
        \caption*{DDNM}
        \centering
        \includegraphics[width=\textwidth]{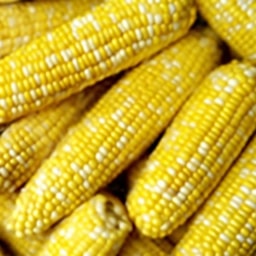}
    \end{subfigure}
    \\
    \begin{subfigure}[h]{0.16\textwidth}
        \caption*{DPS}
        \centering
        \includegraphics[width=\textwidth]{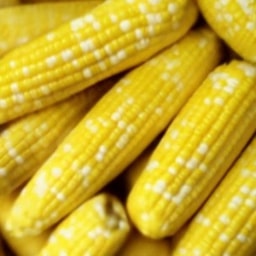}
    \end{subfigure}
    \begin{subfigure}[h]{0.16\textwidth}
        \caption*{DiffPIR}
        \centering
        \includegraphics[width=\textwidth]{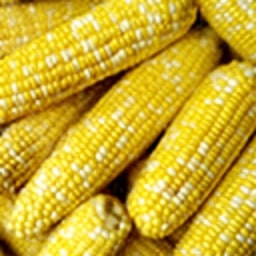}
    \end{subfigure}
    \begin{subfigure}[h]{0.16\textwidth}
        \caption*{{\bf IDPG}}
        \centering
        \includegraphics[width=\textwidth]{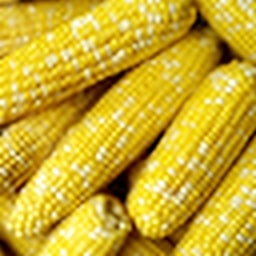}
    \end{subfigure}
    \begin{subfigure}[h]{0.16\textwidth}
        \caption*{{\bf DDPG}}
        \centering
        \includegraphics[width=\textwidth]{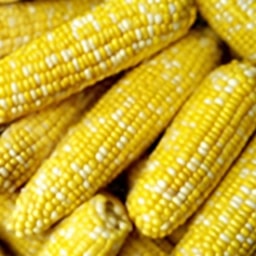}
    \end{subfigure}
    \caption{ImageNet: SRx4 for noiseless bicubic downsampling.}
    \label{fig:SRx4_0_imagenet}
\end{figure}

\begin{figure}
    \centering
    \begin{subfigure}[h]{0.16\textwidth}
        \caption*{Ground truth}
        \centering
        \includegraphics[width=\textwidth]{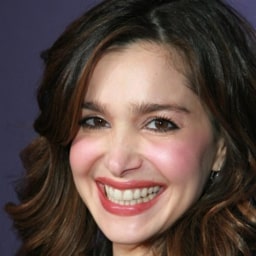}
    \end{subfigure}
    \begin{subfigure}[h]{0.16\textwidth}
        \caption*{Observed image}
        \centering
        \includegraphics[width=\textwidth]{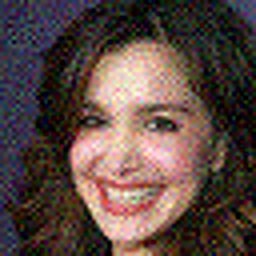}
    \end{subfigure}
    \begin{subfigure}[h]{0.16\textwidth}
        \caption*{SwinIR}
        \centering
        \includegraphics[width=\textwidth]{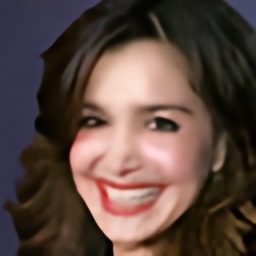}
    \end{subfigure}
    \begin{subfigure}[h]{0.16\textwidth}
        \caption*{DDRM}
        \centering
        \includegraphics[width=\textwidth]{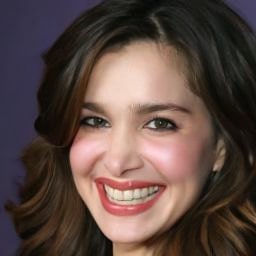}
    \end{subfigure}
    \\
    \begin{subfigure}[h]{0.16\textwidth}
        \caption*{DPS}
        \centering
        \includegraphics[width=\textwidth]{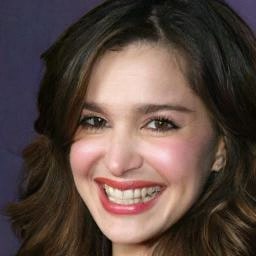}
    \end{subfigure}    
    \begin{subfigure}[h]{0.16\textwidth}
        \caption*{DiffPIR}
        \centering
        \includegraphics[width=\textwidth]{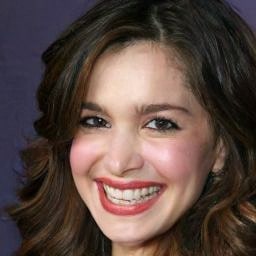}
    \end{subfigure}
    \begin{subfigure}[h]{0.16\textwidth}
        \caption*{{\bf IDPG}}
        \centering
        \includegraphics[width=\textwidth]{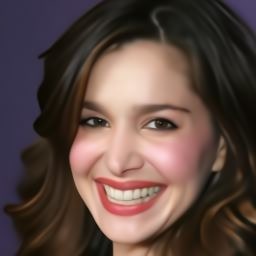}
    \end{subfigure}
    \begin{subfigure}[h]{0.16\textwidth}
        \caption*{{\bf DDPG}}
        \centering
        \includegraphics[width=\textwidth]{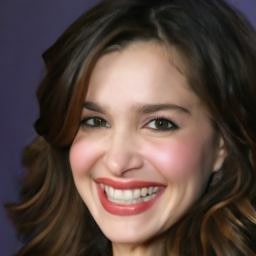}
    \end{subfigure}
    \\
    \centering
    \begin{subfigure}[h]{0.16\textwidth}
        \caption*{Ground truth}
        \centering
        \includegraphics[width=\textwidth]{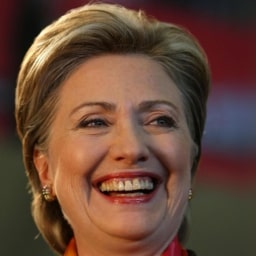}
    \end{subfigure}
    \begin{subfigure}[h]{0.16\textwidth}
        \caption*{Observed image}
        \centering
        \includegraphics[width=\textwidth]{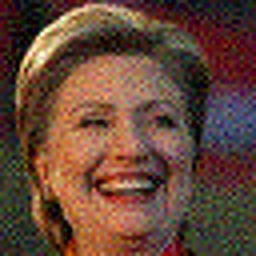}
    \end{subfigure}
    \begin{subfigure}[h]{0.16\textwidth}
        \caption*{SwinIR}
        \centering
        \includegraphics[width=\textwidth]{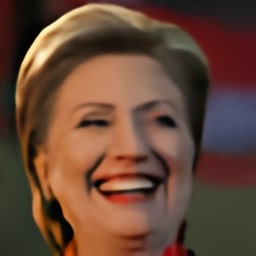}
    \end{subfigure}
    \begin{subfigure}[h]{0.16\textwidth}
        \caption*{DDRM}
        \centering
        \includegraphics[width=\textwidth]{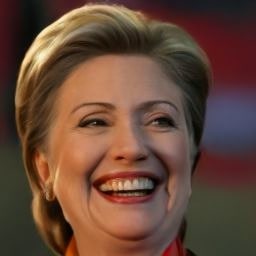}
    \end{subfigure}
    \\
    \begin{subfigure}[h]{0.16\textwidth}
        \caption*{DPS}
        \centering
        \includegraphics[width=\textwidth]{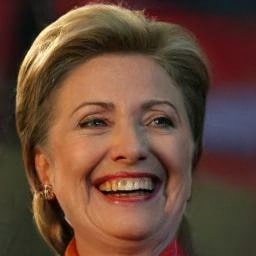}
    \end{subfigure}    
    \begin{subfigure}[h]{0.16\textwidth}
        \caption*{DiffPIR}
        \centering
        \includegraphics[width=\textwidth]{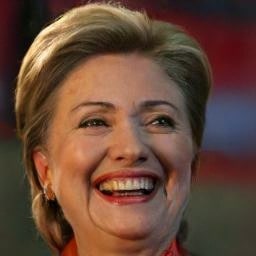}
    \end{subfigure}
    \begin{subfigure}[h]{0.16\textwidth}
        \caption*{{\bf IDPG}}
        \centering
        \includegraphics[width=\textwidth]{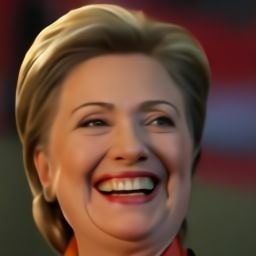}
    \end{subfigure}
    \begin{subfigure}[h]{0.16\textwidth}
        \caption*{{\bf DDPG}}
        \centering
        \includegraphics[width=\textwidth]{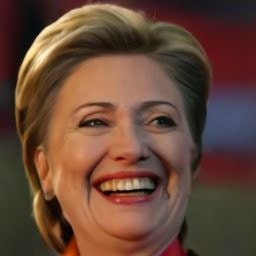}
    \end{subfigure}
    \caption{CelebA-HQ: SRx4 for bicubic downsampling with noise level 0.05.}
    \label{fig:SRx4_0.05_celeba_sup}
\end{figure}

\begin{figure}
    \centering
    \begin{subfigure}[h]{0.16\textwidth}
        \caption*{Ground truth}
        \centering
        \includegraphics[width=\textwidth]{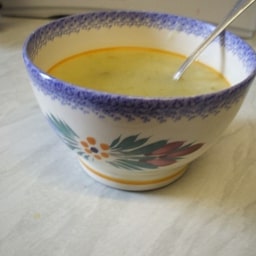}
    \end{subfigure}
    \begin{subfigure}[h]{0.16\textwidth}
        \caption*{Observed image}
        \centering
        \includegraphics[width=\textwidth]{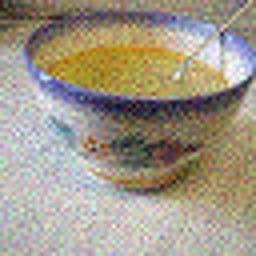}
    \end{subfigure}
    \begin{subfigure}[h]{0.16\textwidth}
        \caption*{SwinIR}
        \centering
        \includegraphics[width=\textwidth]{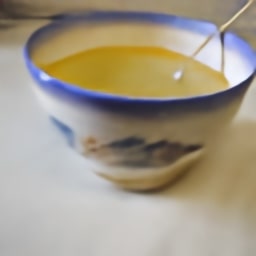}
    \end{subfigure}
    \begin{subfigure}[h]{0.16\textwidth}
        \caption*{DDRM}
        \centering
        \includegraphics[width=\textwidth]{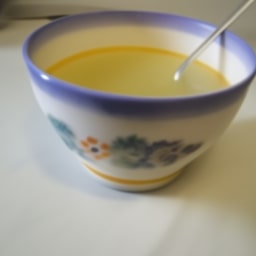}
    \end{subfigure}
    \\
    \begin{subfigure}[h]{0.16\textwidth}
        \caption*{DPS}
        \centering
        \includegraphics[width=\textwidth]{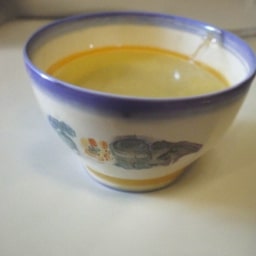}
    \end{subfigure}    
    \begin{subfigure}[h]{0.16\textwidth}
        \caption*{DiffPIR}
        \centering
        \includegraphics[width=\textwidth]{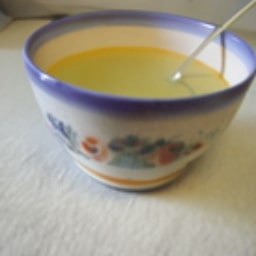}
    \end{subfigure}
    \begin{subfigure}[h]{0.16\textwidth}
        \caption*{{\bf IDPG}}
        \centering
        \includegraphics[width=\textwidth]{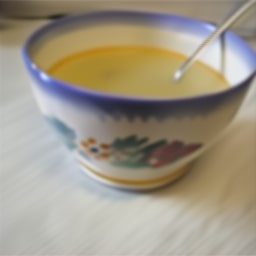}
    \end{subfigure}
    \begin{subfigure}[h]{0.16\textwidth}
        \caption*{{\bf DDPG}}
        \centering
        \includegraphics[width=\textwidth]{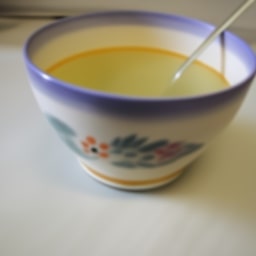}
    \end{subfigure}
    \\
    \vspace{0.2cm} %
    \caption{ImageNet: SRx4 for bicubic downsampling with noise level 0.05.}
    \label{fig:SRx4_0.05_imagenet}
\end{figure}

\begin{figure}
    \centering
    \begin{subfigure}[h]{0.16\textwidth}
        \caption*{Ground truth}
        \centering
        \includegraphics[width=\textwidth]{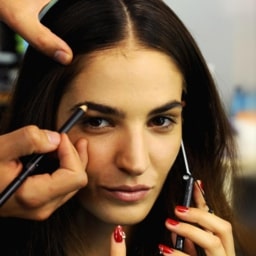}
    \end{subfigure}
    \begin{subfigure}[h]{0.16\textwidth}
        \caption*{Observed image}
        \centering
        \includegraphics[width=\textwidth]{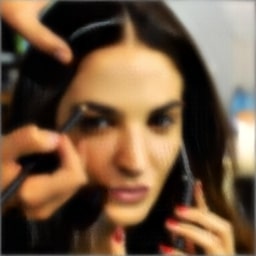}
    \end{subfigure}
    \begin{subfigure}[h]{0.16\textwidth}
        \caption*{Restormer}
        \centering
        \includegraphics[width=\textwidth]{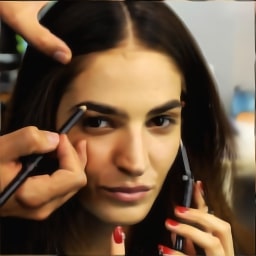}
    \end{subfigure}
    \begin{subfigure}[h]{0.16\textwidth}
        \caption*{DDRM}
        \centering
        \includegraphics[width=\textwidth]{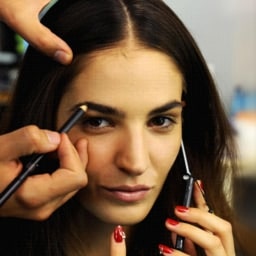}
    \end{subfigure} 
    \begin{subfigure}[h]{0.16\textwidth}
        \caption*{DDNM}
        \centering
        \includegraphics[width=\textwidth]{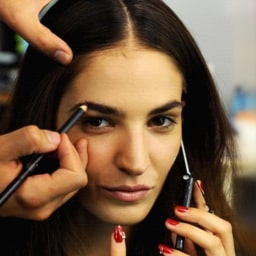}
    \end{subfigure} 
    \\
    \begin{subfigure}[h]{0.16\textwidth}
        \caption*{DPS}
        \centering
        \includegraphics[width=\textwidth]{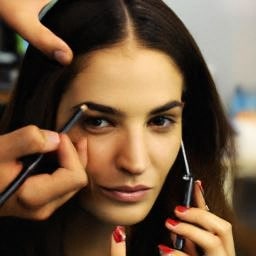}
    \end{subfigure}
    \begin{subfigure}[h]{0.16\textwidth}
        \caption*{DiffPIR}
        \centering
        \includegraphics[width=\textwidth]{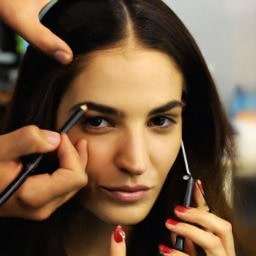}
    \end{subfigure}
    \begin{subfigure}[h]{0.16\textwidth}
        \caption*{{\bf IDPG}}
        \centering
        \includegraphics[width=\textwidth]{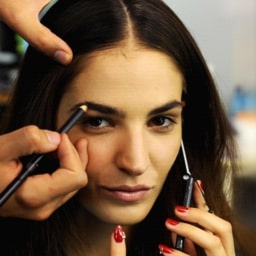}
    \end{subfigure}
    \begin{subfigure}[h]{0.16\textwidth}
        \caption*{{\bf DDPG}}
        \centering
        \includegraphics[width=\textwidth]{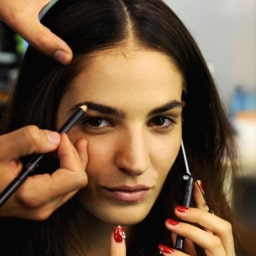}
    \end{subfigure}
    \caption{CelebA-HQ: Deblurring for noiseless Gaussian blur.}
    \label{fig:G_deb_0_celeba}
\end{figure}

\begin{figure}
    \centering
    \begin{subfigure}[h]{0.16\textwidth}
        \caption*{Ground truth}
        \centering
        \includegraphics[width=\textwidth]{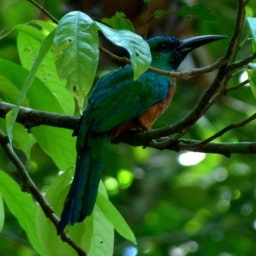}
    \end{subfigure}
    \begin{subfigure}[h]{0.16\textwidth}
        \caption*{Observed image}
        \centering
        \includegraphics[width=\textwidth]{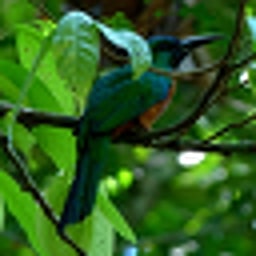}
    \end{subfigure}
    \begin{subfigure}[h]{0.16\textwidth}
        \caption*{DDRM}
        \centering
        \includegraphics[width=\textwidth]{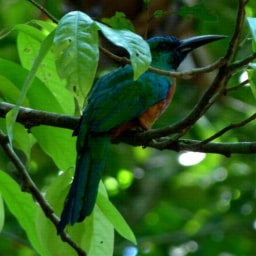}
    \end{subfigure} 
    \begin{subfigure}[h]{0.16\textwidth}
        \caption*{DDNM}
        \centering
        \includegraphics[width=\textwidth]{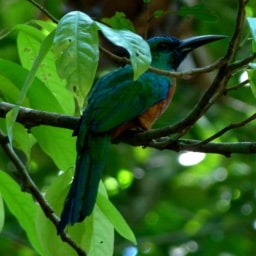}
    \end{subfigure} 
    \\
    \begin{subfigure}[h]{0.16\textwidth}
        \caption*{DPS}
        \centering
        \includegraphics[width=\textwidth]{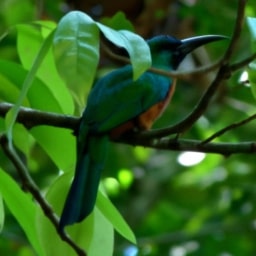}
    \end{subfigure}
    \begin{subfigure}[h]{0.16\textwidth}
        \caption*{DiffPIR}
        \centering
        \includegraphics[width=\textwidth]{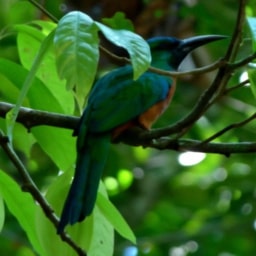}
    \end{subfigure}
    \begin{subfigure}[h]{0.16\textwidth}
        \caption*{{\bf IDPG}}
        \centering
        \includegraphics[width=\textwidth]{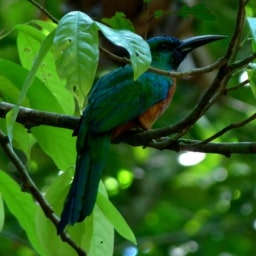}
    \end{subfigure}
    \begin{subfigure}[h]{0.16\textwidth}
        \caption*{{\bf DDPG}}
        \centering
        \includegraphics[width=\textwidth]{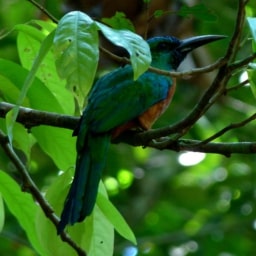}
    \end{subfigure}
    \caption{ImageNet: Deblurring for noiseless Gaussian blur.}
    \label{fig:G_deb_0_imagenet}
\end{figure}

\begin{figure}
    \centering
    \begin{subfigure}[h]{0.16\textwidth}
        \caption*{Ground truth}
        \centering
        \includegraphics[width=\textwidth]{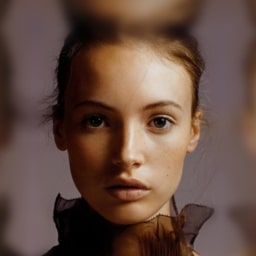}
    \end{subfigure}
    \begin{subfigure}[h]{0.16\textwidth}
        \caption*{Observed image}
        \centering
        \includegraphics[width=\textwidth]{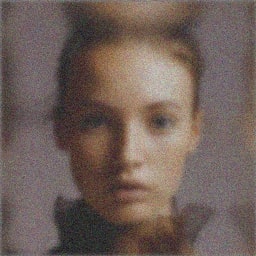}
    \end{subfigure}
    \begin{subfigure}[h]{0.16\textwidth}
        \caption*{Restormer}
        \centering
        \includegraphics[width=\textwidth]{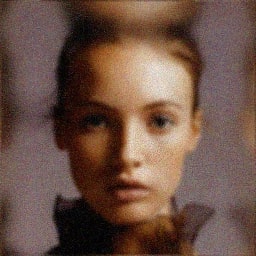}
    \end{subfigure}
    \begin{subfigure}[h]{0.16\textwidth}
        \caption*{DDRM}
        \centering
        \includegraphics[width=\textwidth]{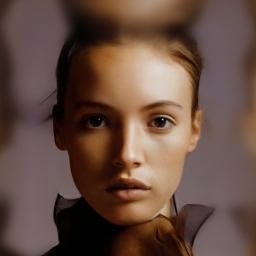}
    \end{subfigure}
    \\
    \begin{subfigure}[h]{0.16\textwidth}
        \caption*{DPS}
        \centering
        \includegraphics[width=\textwidth]{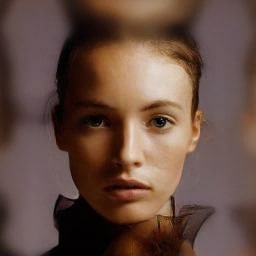}
    \end{subfigure}    
    \begin{subfigure}[h]{0.16\textwidth}
        \caption*{DiffPIR}
        \centering
        \includegraphics[width=\textwidth]{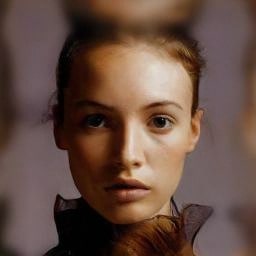}
    \end{subfigure}
    \begin{subfigure}[h]{0.16\textwidth}
        \caption*{{\bf IDPG}}
        \centering
        \includegraphics[width=\textwidth]{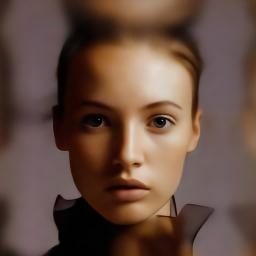}
    \end{subfigure}
    \begin{subfigure}[h]{0.16\textwidth}
        \caption*{{\bf DDPG}}
        \centering
        \includegraphics[width=\textwidth]{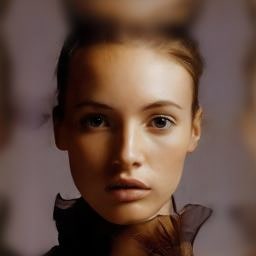}
    \end{subfigure}
    \caption{CelebA-HQ: Deblurring for Gaussian blur with noise level 0.05.}
    \label{fig:G_deb_0.05_celeba}
\end{figure}

\begin{figure}
    \centering
    \begin{subfigure}[h]{0.16\textwidth}
        \caption*{Ground truth}
        \centering
        \includegraphics[width=\textwidth]{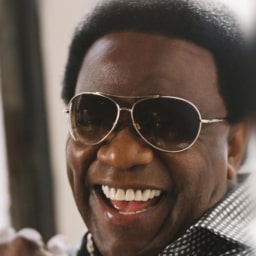}
    \end{subfigure}
    \begin{subfigure}[h]{0.16\textwidth}
        \caption*{Observed image}
        \centering
        \includegraphics[width=\textwidth]{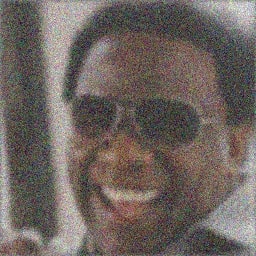}
    \end{subfigure}
    \begin{subfigure}[h]{0.16\textwidth}
        \caption*{Restormer}
        \centering
        \includegraphics[width=\textwidth]{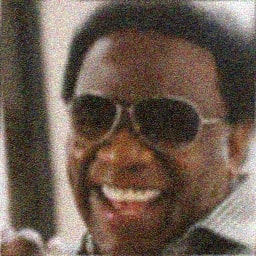}
    \end{subfigure}
    \begin{subfigure}[h]{0.16\textwidth}
        \caption*{DDRM}
        \centering
        \includegraphics[width=\textwidth]{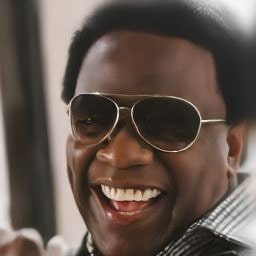}
    \end{subfigure}
    \\
    \begin{subfigure}[h]{0.16\textwidth}
        \caption*{DPS}
        \centering
        \includegraphics[width=\textwidth]{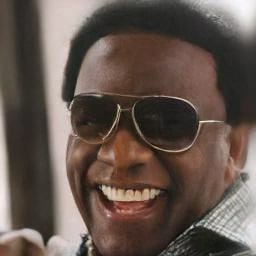}
    \end{subfigure}    
    \begin{subfigure}[h]{0.16\textwidth}
        \caption*{DiffPIR}
        \centering
        \includegraphics[width=\textwidth]{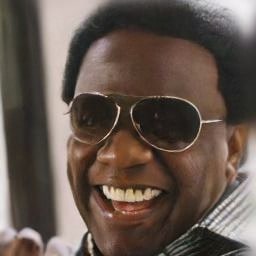}
    \end{subfigure}
    \begin{subfigure}[h]{0.16\textwidth}
        \caption*{{\bf IDPG}}
        \centering
        \includegraphics[width=\textwidth]{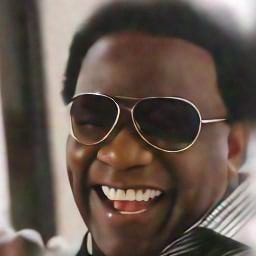}
    \end{subfigure}
    \begin{subfigure}[h]{0.16\textwidth}
        \caption*{{\bf DDPG}}
        \centering
        \includegraphics[width=\textwidth]{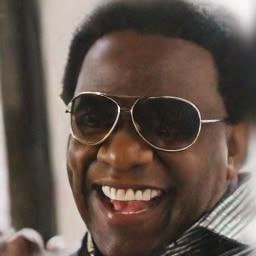}
    \end{subfigure}
    \caption{CelebA-HQ: Deblurring for Gaussian blur with noise level 0.1.}
    \label{fig:G_deb_0.1_celeba}
\end{figure}

\begin{figure}
    \centering
    \begin{subfigure}[h]{0.16\textwidth}
        \caption*{Ground truth}
        \centering
        \includegraphics[width=\textwidth]{figs/imagenet_deblur_g_0.05/188/188_gt.jpg}
    \end{subfigure}
    \begin{subfigure}[h]{0.16\textwidth}
        \caption*{Observed image}
        \centering
        \includegraphics[width=\textwidth]{figs/imagenet_deblur_g_0.05/188/188_y.jpg}
    \end{subfigure}
    \begin{subfigure}[h]{0.16\textwidth}
        \caption*{DDRM}
        \centering
        \includegraphics[width=\textwidth]{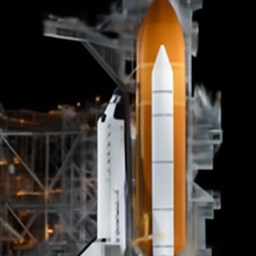}
    \end{subfigure}    
    \\
    \begin{subfigure}[h]{0.16\textwidth}
        \caption*{DPS}
        \centering
        \includegraphics[width=\textwidth]{figs/imagenet_deblur_g_0.05/188/188_DPS.jpg}
    \end{subfigure}
    \begin{subfigure}[h]{0.16\textwidth}
        \caption*{DiffPIR}
        \centering
        \includegraphics[width=\textwidth]{figs/imagenet_deblur_g_0.05/188/188_DIFFPIR.jpg}
    \end{subfigure}
    \begin{subfigure}[h]{0.16\textwidth}
        \caption*{{\bf IDPG}}
        \centering
        \includegraphics[width=\textwidth]{figs/imagenet_deblur_g_0.05/188/188_IDPG.jpg}
    \end{subfigure}
    \begin{subfigure}[h]{0.16\textwidth}
        \caption*{{\bf DDPG}}
        \centering
        \includegraphics[width=\textwidth]{figs/imagenet_deblur_g_0.05/188/188_DDPG_SSM_final.jpg}
    \end{subfigure}
    \\
    \vspace{0.2cm} %
    \centering
    \begin{subfigure}[h]{0.16\textwidth}
        \caption*{Ground truth}
        \centering
        \includegraphics[width=\textwidth]{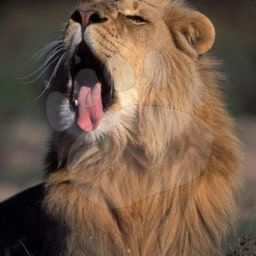}
    \end{subfigure}
    \begin{subfigure}[h]{0.16\textwidth}
        \caption*{Observed image}
        \centering
        \includegraphics[width=\textwidth]{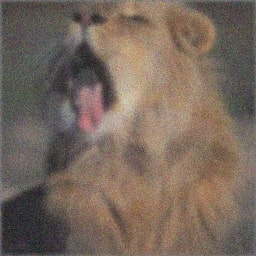}
    \end{subfigure}
    \begin{subfigure}[h]{0.16\textwidth}
        \caption*{DDRM}
        \centering
        \includegraphics[width=\textwidth]{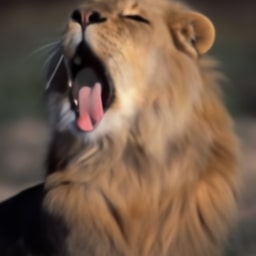}
    \end{subfigure}    
    \\
    \begin{subfigure}[h]{0.16\textwidth}
        \caption*{DPS}
        \centering
        \includegraphics[width=\textwidth]{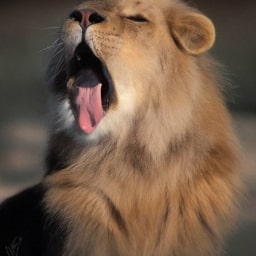}
    \end{subfigure}
    \begin{subfigure}[h]{0.16\textwidth}
        \caption*{DiffPIR}
        \centering
        \includegraphics[width=\textwidth]{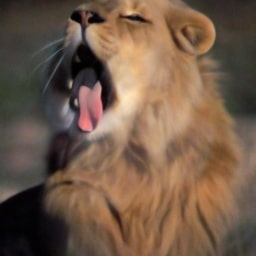}
    \end{subfigure}
    \begin{subfigure}[h]{0.16\textwidth}
        \caption*{{\bf IDPG}}
        \centering
        \includegraphics[width=\textwidth]{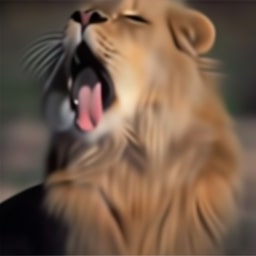}
    \end{subfigure}
    \begin{subfigure}[h]{0.16\textwidth}
        \caption*{{\bf DDPG}}
        \centering
        \includegraphics[width=\textwidth]{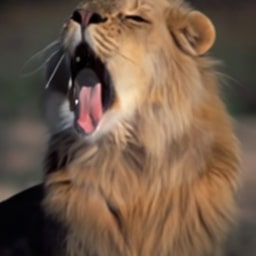}
    \end{subfigure}
    \caption{ImageNet: Deblurring for Gaussian blur with noise level 0.05.}
    \label{fig:G_deb_0.05_imagenet}
\end{figure}

\begin{figure}
    \centering
    \begin{subfigure}[h]{0.16\textwidth}
        \caption*{Ground truth}
        \centering
        \includegraphics[width=\textwidth]{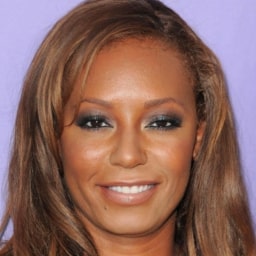}
    \end{subfigure}
    \begin{subfigure}[h]{0.16\textwidth}
        \caption*{Observed image}
        \centering
        \includegraphics[width=\textwidth]{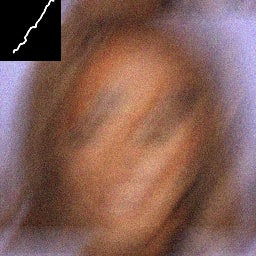}
    \end{subfigure}
    \begin{subfigure}[h]{0.16\textwidth}
        \caption*{DPS}
        \centering
        \includegraphics[width=\textwidth]{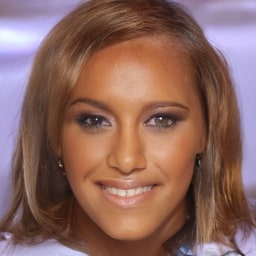}
    \end{subfigure}
    \begin{subfigure}[h]{0.16\textwidth}
        \caption*{DiffPIR}
        \centering
        \includegraphics[width=\textwidth]{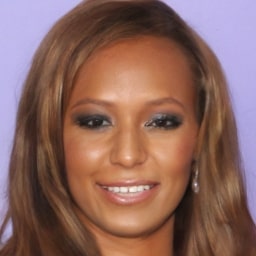}
    \end{subfigure}
    \begin{subfigure}[h]{0.16\textwidth}
        \caption*{{\bf IDPG}}
        \centering
        \includegraphics[width=\textwidth]{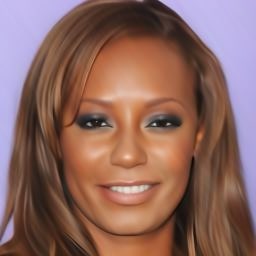}
    \end{subfigure}
    \begin{subfigure}[h]{0.16\textwidth}
        \caption*{{\bf DDPG}}
        \centering
        \includegraphics[width=\textwidth]{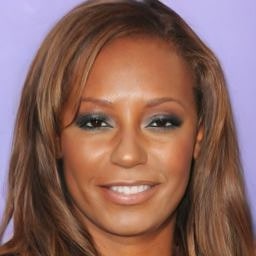}
    \end{subfigure}
    \\
    \vspace{0.2cm} %
    \centering
    \begin{subfigure}[h]{0.16\textwidth}
        \caption*{Ground truth}
        \centering
        \includegraphics[width=\textwidth]{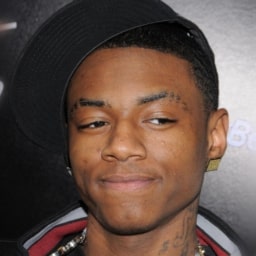}
    \end{subfigure}
    \begin{subfigure}[h]{0.16\textwidth}
        \caption*{Observed image}
        \centering
        \includegraphics[width=\textwidth]{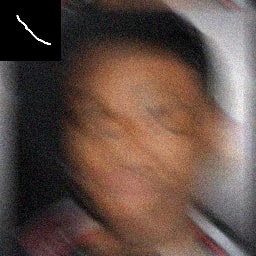}
    \end{subfigure}
    \begin{subfigure}[h]{0.16\textwidth}
        \caption*{DPS}
        \centering
        \includegraphics[width=\textwidth]{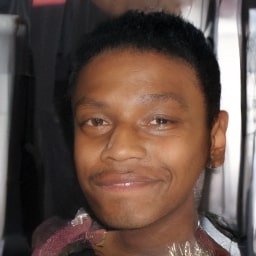}
    \end{subfigure}
    \begin{subfigure}[h]{0.16\textwidth}
        \caption*{DiffPIR}
        \centering
        \includegraphics[width=\textwidth]{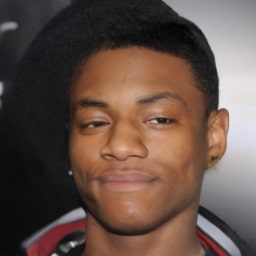}
    \end{subfigure}
    \begin{subfigure}[h]{0.16\textwidth}
        \caption*{{\bf IDPG}}
        \centering
        \includegraphics[width=\textwidth]{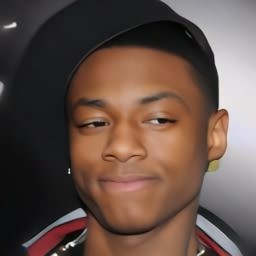}
    \end{subfigure}
    \begin{subfigure}[h]{0.16\textwidth}
        \caption*{{\bf DDPG}}
        \centering
        \includegraphics[width=\textwidth]{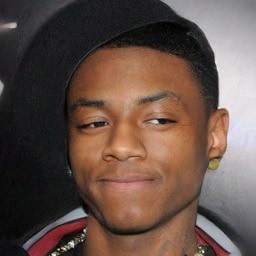}
    \end{subfigure}
    \caption{CelebA-HQ: Deblurring for motion blur with noise level 0.05.}
    \label{fig:Motion_deb_0.05_celeba}
\end{figure}

\begin{figure}
    \centering
    \begin{subfigure}[h]{0.16\textwidth}
        \caption*{Ground truth}
        \centering
        \includegraphics[width=\textwidth]{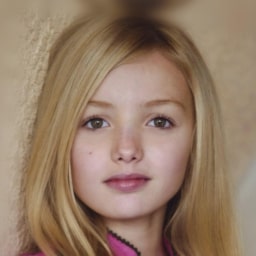}
    \end{subfigure}
    \begin{subfigure}[h]{0.16\textwidth}
        \caption*{Observed image}
        \centering
        \includegraphics[width=\textwidth]{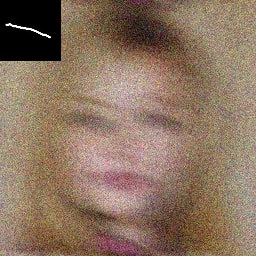}
    \end{subfigure}
    \begin{subfigure}[h]{0.16\textwidth}
        \caption*{DPS}
        \centering
        \includegraphics[width=\textwidth]{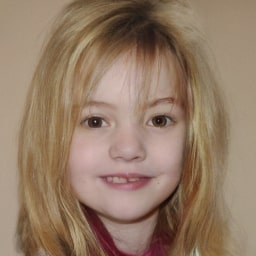}
    \end{subfigure}
    \begin{subfigure}[h]{0.16\textwidth}
        \caption*{DiffPIR}
        \centering
        \includegraphics[width=\textwidth]{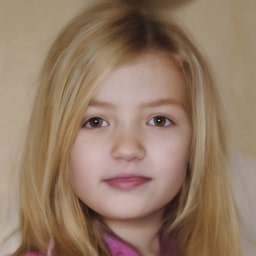}
    \end{subfigure}
    \begin{subfigure}[h]{0.16\textwidth}
        \caption*{{\bf IDPG}}
        \centering
        \includegraphics[width=\textwidth]{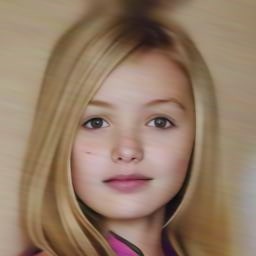}
    \end{subfigure}
    \begin{subfigure}[h]{0.16\textwidth}
        \caption*{{\bf DDPG}}
        \centering
        \includegraphics[width=\textwidth]{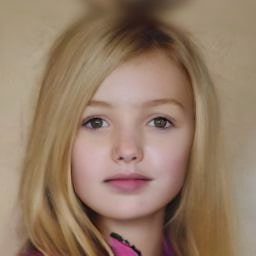}
    \end{subfigure}
    \\
    \vspace{0.2cm} %
    \centering
    \begin{subfigure}[h]{0.16\textwidth}
        \caption*{Ground truth}
        \centering
        \includegraphics[width=\textwidth]{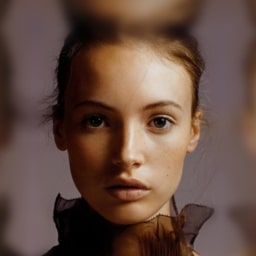}
    \end{subfigure}
    \begin{subfigure}[h]{0.16\textwidth}
        \caption*{Observed image}
        \centering
        \includegraphics[width=\textwidth]{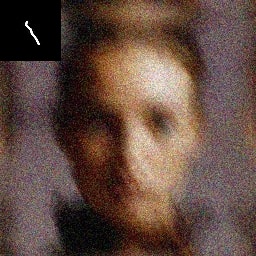}
    \end{subfigure}
    \begin{subfigure}[h]{0.16\textwidth}
        \caption*{DPS}
        \centering
        \includegraphics[width=\textwidth]{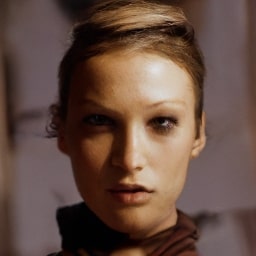}
    \end{subfigure}
    \begin{subfigure}[h]{0.16\textwidth}
        \caption*{DiffPIR}
        \centering
        \includegraphics[width=\textwidth]{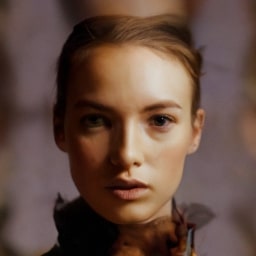}
    \end{subfigure}
    \begin{subfigure}[h]{0.16\textwidth}
        \caption*{{\bf IDPG}}
        \centering
        \includegraphics[width=\textwidth]{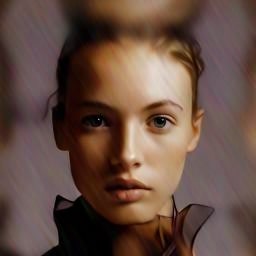}
    \end{subfigure}
    \begin{subfigure}[h]{0.16\textwidth}
        \caption*{{\bf DDPG}}
        \centering
        \includegraphics[width=\textwidth]{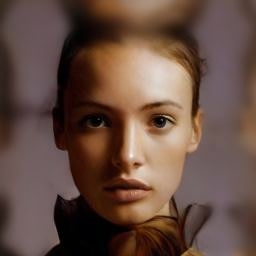}
    \end{subfigure}
        \\
    \vspace{0.2cm} %
    \centering
    \begin{subfigure}[h]{0.16\textwidth}
        \caption*{Ground truth}
        \centering
        \includegraphics[width=\textwidth]{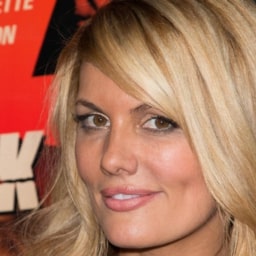}
    \end{subfigure}
    \begin{subfigure}[h]{0.16\textwidth}
        \caption*{Observed image}
        \centering
        \includegraphics[width=\textwidth]{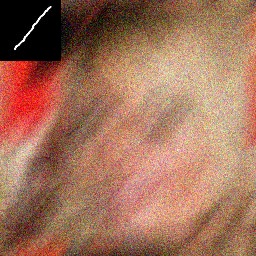}
    \end{subfigure}
    \begin{subfigure}[h]{0.16\textwidth}
        \caption*{DPS}
        \centering
        \includegraphics[width=\textwidth]{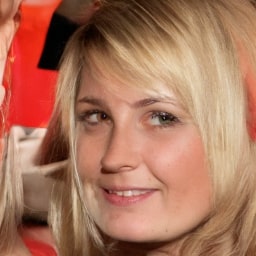}
    \end{subfigure}
    \begin{subfigure}[h]{0.16\textwidth}
        \caption*{DiffPIR}
        \centering
        \includegraphics[width=\textwidth]{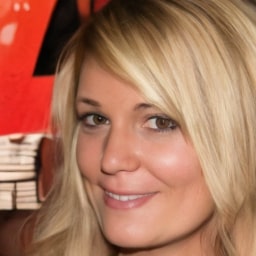}
    \end{subfigure}
    \begin{subfigure}[h]{0.16\textwidth}
        \caption*{{\bf IDPG}}
        \centering
        \includegraphics[width=\textwidth]{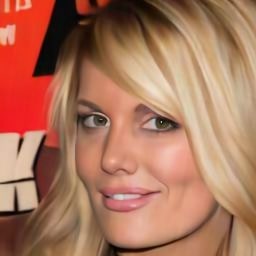}
    \end{subfigure}
    \begin{subfigure}[h]{0.16\textwidth}
        \caption*{{\bf DDPG}}
        \centering
        \includegraphics[width=\textwidth]{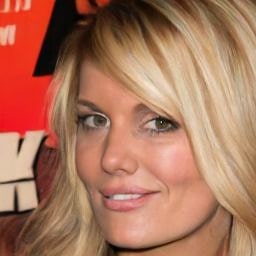}
    \end{subfigure}
    \caption{CelebA-HQ: Deblurring for motion blur with noise level 0.1.}
    \label{fig:Motion_deb_0.1_celeba}
\end{figure}

\begin{figure}
    \centering
    \begin{subfigure}[h]{0.16\textwidth}
        \caption*{Ground truth}
        \centering
        \includegraphics[width=\textwidth]{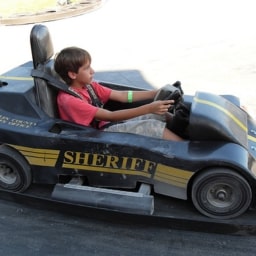}
    \end{subfigure}
    \begin{subfigure}[h]{0.16\textwidth}
        \caption*{Observed image}
        \centering
        \includegraphics[width=\textwidth]{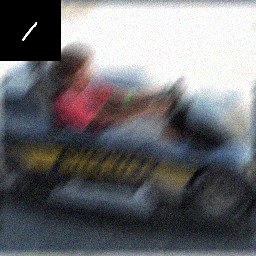}
    \end{subfigure}
    \begin{subfigure}[h]{0.16\textwidth}
        \caption*{DPS}
        \centering
        \includegraphics[width=\textwidth]{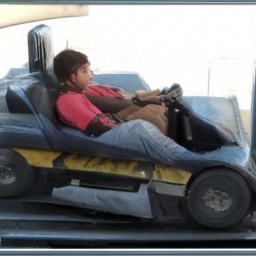}
    \end{subfigure}
    \begin{subfigure}[h]{0.16\textwidth}
        \caption*{DiffPIR}
        \centering
        \includegraphics[width=\textwidth]{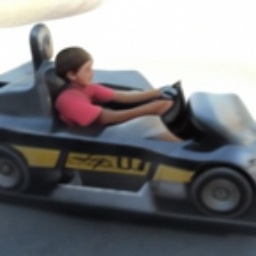}
    \end{subfigure}
    \begin{subfigure}[h]{0.16\textwidth}
        \caption*{{\bf IDPG}}
        \centering
        \includegraphics[width=\textwidth]{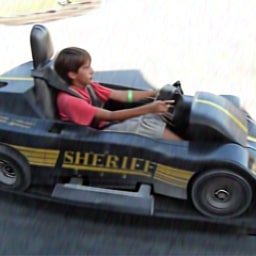}
    \end{subfigure}
    \begin{subfigure}[h]{0.16\textwidth}
        \caption*{{\bf DDPG}}
        \centering
        \includegraphics[width=\textwidth]{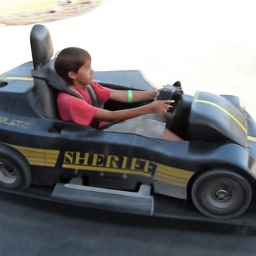}
    \end{subfigure}
    \\
    \vspace{0.2cm} %
    \centering
    \begin{subfigure}[h]{0.16\textwidth}
        \caption*{Ground truth}
        \centering
        \includegraphics[width=\textwidth]{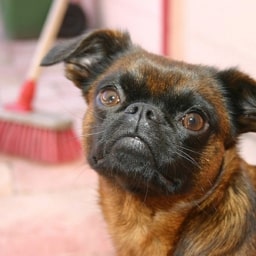}
    \end{subfigure}
    \begin{subfigure}[h]{0.16\textwidth}
        \caption*{Observed image}
        \centering
        \includegraphics[width=\textwidth]{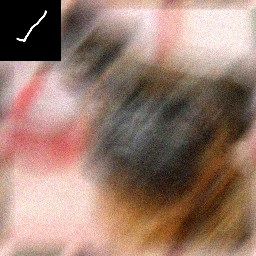}
    \end{subfigure}
    \begin{subfigure}[h]{0.16\textwidth}
        \caption*{DPS}
        \centering
        \includegraphics[width=\textwidth]{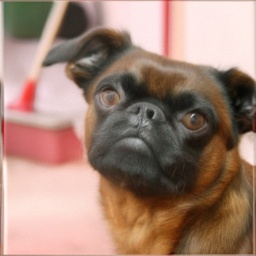}
    \end{subfigure}
    \begin{subfigure}[h]{0.16\textwidth}
        \caption*{DiffPIR}
        \centering
        \includegraphics[width=\textwidth]{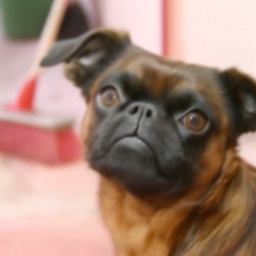}
    \end{subfigure}
    \begin{subfigure}[h]{0.16\textwidth}
        \caption*{{\bf IDPG}}
        \centering
        \includegraphics[width=\textwidth]{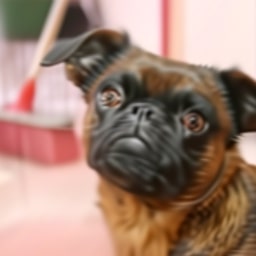}
    \end{subfigure}
    \begin{subfigure}[h]{0.16\textwidth}
        \caption*{{\bf DDPG}}
        \centering
        \includegraphics[width=\textwidth]{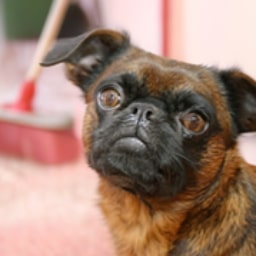}
    \end{subfigure}
        \vspace{0.2cm} %
    \centering
    \begin{subfigure}[h]{0.16\textwidth}
        \caption*{Ground truth}
        \centering
        \includegraphics[width=\textwidth]{figs/imagenet_motion_deb_0.05/4/4_gt.jpg}
    \end{subfigure}
    \begin{subfigure}[h]{0.16\textwidth}
        \caption*{Observed image}
        \centering
        \includegraphics[width=\textwidth]{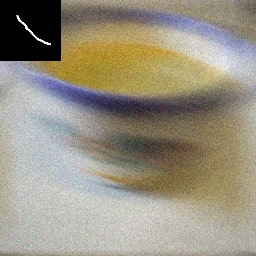}
    \end{subfigure}
    \begin{subfigure}[h]{0.16\textwidth}
        \caption*{DPS}
        \centering
        \includegraphics[width=\textwidth]{figs/imagenet_motion_deb_0.05/4/4_DPS.JPEG}
    \end{subfigure}
    \begin{subfigure}[h]{0.16\textwidth}
        \caption*{DiffPIR}
        \centering
        \includegraphics[width=\textwidth]{figs/imagenet_motion_deb_0.05/4/4_DIFFPIR.JPEG}
    \end{subfigure}
    \begin{subfigure}[h]{0.16\textwidth}
        \caption*{{\bf IDPG}}
        \centering
        \includegraphics[width=\textwidth]{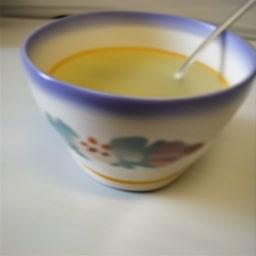}
    \end{subfigure}
    \begin{subfigure}[h]{0.16\textwidth}
        \caption*{{\bf DDPG}}
        \centering
        \includegraphics[width=\textwidth]{figs/imagenet_motion_deb_0.05/4/4_DDPG_SSM.jpg}
    \end{subfigure}
    \caption{ImageNet: Deblurring for motion blur with noise level 0.05.}
    \label{fig:Motion_deb_0.05_imagenet}
\end{figure}

\begin{figure}
    \centering
    \begin{subfigure}[h]{0.16\textwidth}
        \caption*{Ground truth}
        \centering
        \includegraphics[width=\textwidth]{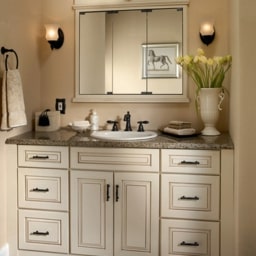}
    \end{subfigure}
    \begin{subfigure}[h]{0.16\textwidth}
        \caption*{Observed image}
        \centering
        \includegraphics[width=\textwidth]{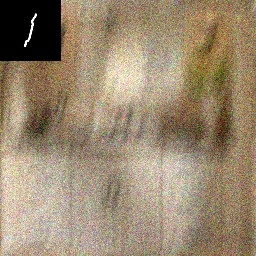}
    \end{subfigure}
    \begin{subfigure}[h]{0.16\textwidth}
        \caption*{DPS}
        \centering
        \includegraphics[width=\textwidth]{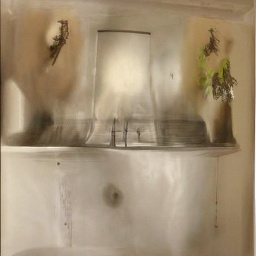}
    \end{subfigure}
    \begin{subfigure}[h]{0.16\textwidth}
        \caption*{DiffPIR}
        \centering
        \includegraphics[width=\textwidth]{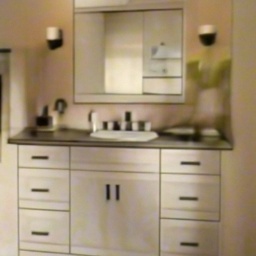}
    \end{subfigure}
    \begin{subfigure}[h]{0.16\textwidth}
        \caption*{{\bf IDPG}}
        \centering
        \includegraphics[width=\textwidth]{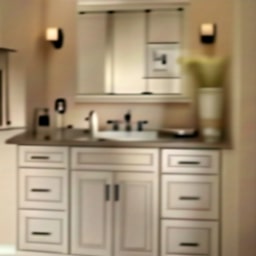}
    \end{subfigure}
    \begin{subfigure}[h]{0.16\textwidth}
        \caption*{{\bf DDPG}}
        \centering
        \includegraphics[width=\textwidth]{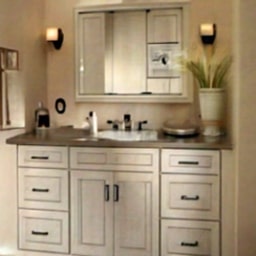}
    \end{subfigure}
    \\
    \vspace{0.2cm} %
    \centering
    \begin{subfigure}[h]{0.16\textwidth}
        \caption*{Ground truth}
        \centering
        \includegraphics[width=\textwidth]{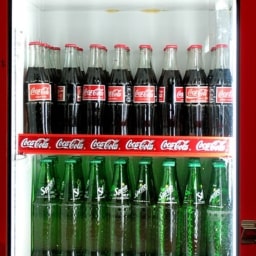}
    \end{subfigure}
    \begin{subfigure}[h]{0.16\textwidth}
        \caption*{Observed image}
        \centering
        \includegraphics[width=\textwidth]{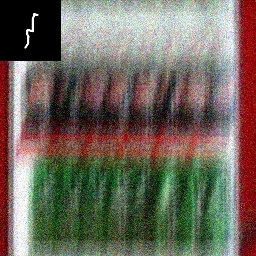}
    \end{subfigure}
    \begin{subfigure}[h]{0.16\textwidth}
        \caption*{DPS}
        \centering
        \includegraphics[width=\textwidth]{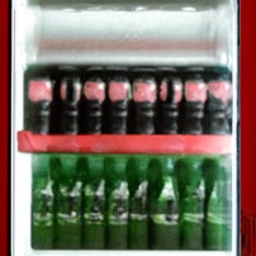}
    \end{subfigure}
    \begin{subfigure}[h]{0.16\textwidth}
        \caption*{DiffPIR}
        \centering
        \includegraphics[width=\textwidth]{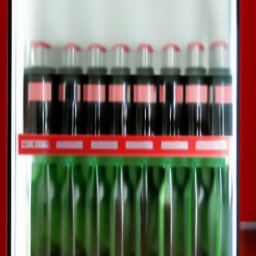}
    \end{subfigure}
    \begin{subfigure}[h]{0.16\textwidth}
        \caption*{{\bf IDPG}}
        \centering
        \includegraphics[width=\textwidth]{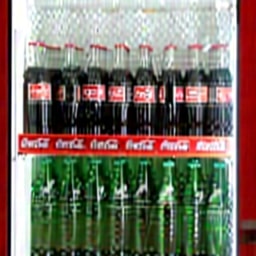}
    \end{subfigure}
    \begin{subfigure}[h]{0.16\textwidth}
        \caption*{{\bf DDPG}}
        \centering
        \includegraphics[width=\textwidth]{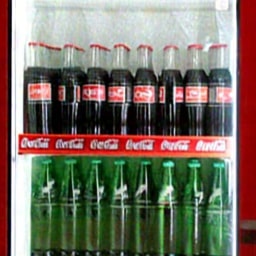}
    \end{subfigure}
    \caption{ImageNet: Deblurring for motion blur with noise level 0.1.}
    \label{fig:Motion_deb_0.1_imagenet}
\end{figure}

\end{document}